\documentclass[11pt]{article}
\usepackage[dvipsnames,usenames]{color}
\usepackage{amsfonts,amssymb,amsthm,mathtools}
\usepackage{paralist}
\usepackage[ruled, noend, noline, linesnumbered]{algorithm2e}
\usepackage{algpseudocode}
\usepackage{bm}
\usepackage{xspace}
\usepackage{xspace,prettyref}
\usepackage{framed}
\usepackage{caption}
\usepackage{subcaption}
\usepackage{color}
\usepackage{mathbbol}
\usepackage[pagebackref,letterpaper=true,colorlinks=true,pdfpagemode=none,urlcolor=blue,linkcolor=blue,citecolor=BrickRed,pdfstartview=FitH]{hyperref}
\usepackage{fullpage}

\newtheorem{theorem}{Theorem}

\newtheorem{lemma}[theorem]{Lemma}

\newtheorem{definition}{Definition}
\theoremstyle{definition}

\newcommand{\EX}{\hbox{\bf E}}

\def\eps{\varepsilon}

\def\floor#1{\lfloor {#1} \rfloor}

        {\hspace*{\fill}$\Box$\par}

%% Hypercube related macros

%% Calligraphic letters

%% Bold letters

%% Mathbold letter

\newcommand{\NN}{\mathbb{N}}

%% Tildes

\newcommand{\tF}{\widetilde{F}}
\newcommand{\tG}{\widetilde{G}}

%% Hyper-linked References
\newcommand{\Sec}[1]{\hyperref[sec:#1]{\S\ref*{sec:#1}}} %section
\newcommand{\Eqn}[1]{\hyperref[eqn:#1]{(\ref*{eqn:#1})}} %equation
\newcommand{\Fig}[1]{\hyperref[fig:#1]{Fig.\,\ref*{fig:#1}}} %figure
\newcommand{\Tab}[1]{\hyperref[tab:#1]{Tab.\,\ref*{tab:#1}}} %table
\newcommand{\Thm}[1]{\hyperref[thm:#1]{Theorem\,\ref*{thm:#1}}} %theorem
\newcommand{\Lem}[1]{\hyperref[lem:#1]{Lemma\,\ref*{lem:#1}}} %lemma
\newcommand{\Prop}[1]{\hyperref[prop:#1]{Prop.~\ref*{prop:#1}}} %property
\newcommand{\Cor}[1]{\hyperref[cor:#1]{Corollary~\ref*{cor:#1}}} %corollary
\newcommand{\Def}[1]{\hyperref[def:#1]{Definition~\ref*{def:#1}}} %definition
\newcommand{\Alg}[1]{\hyperref[alg:#1]{Alg.~\ref*{alg:#1}}} %algorithm
\newcommand{\Ex}[1]{\hyperref[ex:#1]{Ex.~\ref*{ex:#1}}} %example
\newcommand{\Clm}[1]{\hyperref[clm:#1]{Claim~\ref*{clm:#1}}} %example

\newcommand{\hash}{\hbox{hash}}

\newcommand{\C}{\hbox{C}}
\newcommand{\tC}{\widetilde{\hbox{C}}}
\newcommand{\ct}{\hbox{ct}}
\newcommand{\tg}{\hbox{TG}}
\newcommand{\cdf}{\hbox{C}}

\newcommand{\rd}{{\hbox{red}}}
\newcommand{\loss}{{\hbox{loss}}}
\newcommand{\hN}{\widehat{N}}
\newcommand{\mult}{k}

\newcommand{\degdist}{{\tt headtail}}
\newcommand{\update}{{\tt update}}
\newcommand{\est}{{\tt estimate}}

% correct bad hyphenation here
\hyphenation{op-tical net-works semi-conduc-tor}

\author{Olivia Simpson\thanks{Work was done while the author
was an intern at Sandia National Laboratories, Livermore.}
\\ {\tt osimpson@ucsd.edu}\\
University of California, San Diego
\and C. Seshadhri
 \\ {\tt scomandu@ucsc.edu}\\
University of California, Santa Cruz
\and Andrew McGregor
\\ {\tt mcgregor@cs.umass.edu}\\
University of Massachusetts, Amherst
}

\title{Catching the head, tail, and everything in between: a streaming algorithm for the degree distribution}
\date{}

\begin{document}

\maketitle

\begin{abstract}
%\boldmath
The degree distribution is one of the most fundamental graph properties of interest
for real-world graphs. It has been widely observed in numerous domains that graphs
typically have a \emph{tailed} or \emph{scale-free} degree distribution. While the average
degree is usually quite small, the variance is quite high and there are vertices
with degrees at all scales. We focus on the problem of approximating the degree distribution
of a large streaming graph, with small storage. We design an algorithm \degdist{}, whose main novelty
is a new estimator of infrequent degrees using truncated geometric random
variables. We give a mathematical analysis of \degdist{}
and show that it has excellent behavior in practice. We can process streams with millions
of edges with storage less than $1\%$ and get extremely accurate approximations
for \emph{all} scales in the degree distribution.

We also introduce a new notion of \emph{Relative Hausdorff} distance between tailed
histograms. Existing notions of distances between distributions are not suitable, since they ignore
infrequent degrees in the tail. The Relative Hausdorff distance measures deviations at all scales,
and is a more suitable distance for comparing degree distributions. By tracking this new measure,
we are able to give strong empirical evidence of the convergence of \degdist.
\end{abstract}

\section{Introduction}

Graphs are a natural abstraction for any data set with entities and relationship between them. Popular examples include online social networks such as Facebook and Twitter;  transportation networks; biological networks such as protein-protein interaction and metabolic networks; and communication networks such as the internet and telephone and email networks.
Many of these graphs are most naturally represented by
a \emph{stream of edges}. Especially for social and communication networks, each edge has an associated timestamp, and the graph
is basically an aggregate of all these edges over some time window.
Such streams are typically quite massive; social networks like Facebook and Twitter can generate billions of communication links in a day~\cite{KwLe10,FB}.
A publicly available HTTP request dataset has billions of requests~\cite{Meiss08WSDM}. The scale of these data sizes has led
to interest in \emph{small-space streaming algorithms}. Such algorithms accurately compute specific properties
of the total graph, using a memory footprint that is orders of magnitude smaller in size.

Arguably, one of the most important properties of real-world networks is the \emph{degree distribution}. Seminal papers
in massive graph analysis studied precisely this quantity~\cite{BarabasiAlbert99,FFF99,BrKu+00}.
The study of degree distributions
is probably the birthplace of real-world network analysis. It has been found to be relevant for graph modeling, network resilience, and algorithmics~\cite{CoEr+00,NeStWa01,PeFlLa+02,Ne03,Mi03,ChFa06,SeKoPi11}.
One of the key discoveries of network analysis is the presence
of \emph{scale-free} or heavy-tailed degree distributions. The average degree of a node is usually small,
but there are nodes with degrees at all scales. The very notion of a \emph{scale-free network} has entered the common parlance
because of its relevance to network analysis~\cite{Wiki-scale}.

\subsection{Problem statement} \label{sec:problem}

The input is a stream of edges $e_1, e_2, \ldots, e_m$ without any repetitions.
The graph created by these edges is denoted $G = (V,E)$. For convenience,
we set $V = [n]$, though the labels may be from some arbitrary discrete universe.
We do not assume that the algorithm knows $n$ and $m$, the number of vertices
and edges respectively.
Each edge is represented by a pair $(u,v)$ of vertex labels.

For vertex $v \in V$, $d_v$ denotes its degree (the number of neighbors of $v$).
We set $n(d)$ to be the number of vertices of degree $d$, and $N(d)$ to be the number
of vertices of degree at least $d$. In math, $N(d) = \sum_{r \geq d} n_r$.
It is convenient for us to work with unnormalized raw counts, so we deal
with histograms rather than distributions.
We denote the sequence $\{n(d)\}$ by the \emph{degree histogram} (dh) and
$\{N(d)\}$ is the \emph{complementary cumulative degree histogram}\footnote{This is often
called the cumulative degree distribution, but that is counter to the standard
definition for probability distributions.} (ccdh).
When $\{n(d)\}$ is normalized by $n$, it is called the degree distribution.
We focus on the ccdh, instead of the dh. Typically, the dh is quite noisy in real data, and
the ccdh has the added benefit of being monotonically decreasing. (Focus on the ccdh is standard
for fitting procedures~\cite{ClShNe09}.)

We study the problem of approximating the ccdh of $G$ using a \emph{small-space one-pass streaming algorithm}.
Such an algorithm has some limited memory, denoted $M$. It sees the edges in stream order,
and on seeing edge $e_t$, updates the memory $M$. The algorithm cannot access older edges,
and $M$ is typically order of magnitudes smaller than the size of the stream. At the end of the stream,
the algorithm reports a sequence $\{\hN(d)\}$, an approximation to the ccdh of $G$.

We make no assumption on the ordering of edges.
We do not consider edge deletions or edge repetitions. (This is the standard
model used in most work on practical streaming algorithms.)

% This has numerous applications including community detection  in a social network and finding influential nodes suitable for targeted advertising; identifying promising therapeutic targets in biological networks; and designing better algorithms for the infrastructure supporting man-made networks such as routing protocols and web search.
%
% Existing work focuses mainly on estimating aggregate statistics such as the number of triangles \cite{jha2013space} and the related transitivity coefficient or partitioning the graph based on the sizes of cuts and more general spectral properties \cite{AhnCGMW15,PavanTTW13}.
% We present an algorithm for estimating the degree distribution of a graph. The degree distribution captures important about the graph including the fragility of the network to noise and \ldots. Barab\'asi-Albert model. Work in the past has considered fitting parameters \cite{} to specific models for the degree distribution, the most common being power-law distributions.
%
%
% Let $G$ be a simple undirected graph given by a stream of edges $\sigma = (e_1,
% e_2, \ldots, e_m)$ in some arbitrary order.  We are allowed one pass over the
% stream for our computation, and we store information about $s$ sampled vertices
% from the stream.

\begin{figure*}[t]
    \centering
    \begin{subfigure}[b]{0.3\textwidth}
        \centering \includegraphics[width=\textwidth]{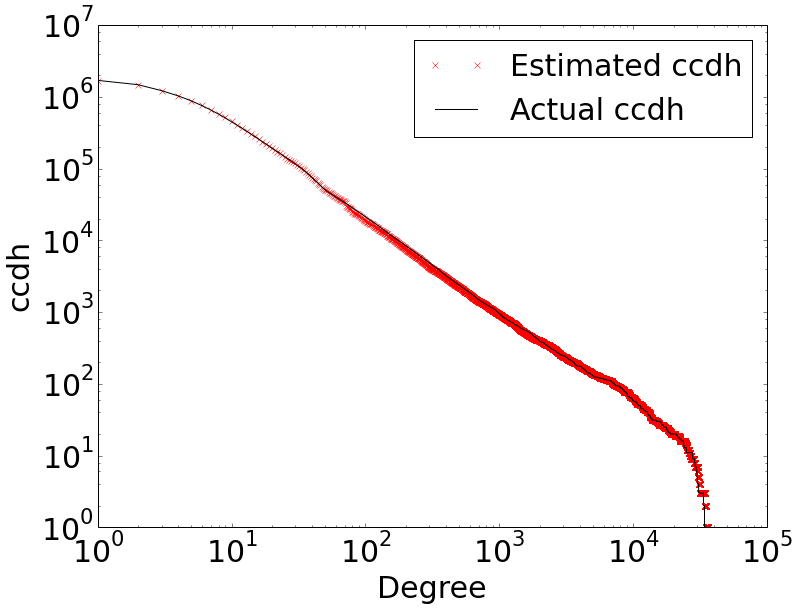}
        \caption{as-Skitter: $n = 1.7M$, $m = 11M$, storage $=31K$}
        \label{fig:skitter-ccdh}
    \end{subfigure}
    \hfill
    \begin{subfigure}[b]{0.3\textwidth}
        \centering \includegraphics[width=\textwidth]{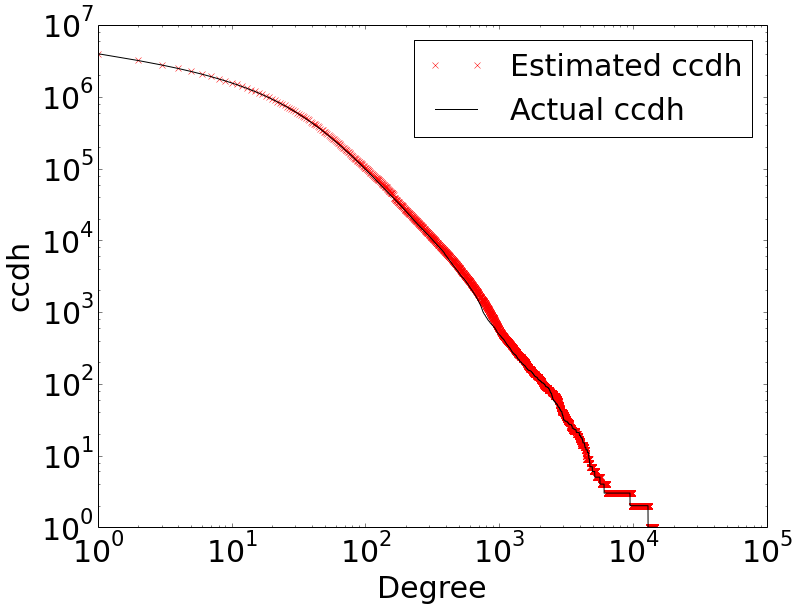}
        \caption{com-LiveJournal: $n = 4M$, $m = 34M$, storage $=200K$}
        \label{fig:livejournal_tailspace_04}
        \end{subfigure}
    \hfill
    \begin{subfigure}[b]{0.3\textwidth}
        \centering \includegraphics[width=\textwidth]{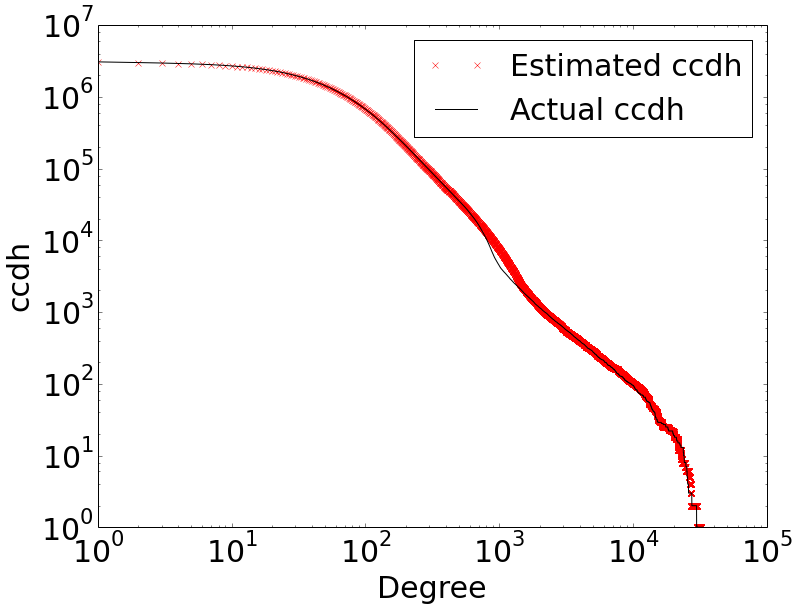}
        \caption{com-Orkut: $n = 3M$, $m = 117M$, storage $=150K$}
        \label{fig:orkut_tailspace_04}
    \end{subfigure}
    \caption{The output ccdh of \degdist{} on three different input graphs from
the SNAP~\cite{snap} collection. In each case, the storage is less $1\%$ of the stream
    (and less than $5\%$ of the number of vertices). Observe the near identical match
    with the true ccdh.}
    \label{fig:ccdh}
\end{figure*}

\subsection{Challenges}

\textbf{How does a small-space algorithm estimate the degree distribution at all scales?}
 The degree distribution involves degrees at ``all" scales:
many low degree vertices, some intermediate degree vertices, and few very high degree vertices.
Look at~\Fig{skitter-ccdh} for the ccdh of a router topology network. The average degree is $20$,
but there are vertices with degrees up to $50,000$.
The count of low degree vertices is easy to estimate, since a simple random sample of vertices gives a good estimate.
Intermediate and high degrees pose a problem. There are few such vertices but it is critical to sample
their count accurately. There is a huge literature on estimating distribution properties
of a stream of items: frequent items, distribution moments, distinct items, etc.~\cite{IndykW05,KaneNW10,cormode2008frequent}.
(We discuss in depth later.) But these only give specific properties of the distribution. None of these methods can get frequency estimates at \emph{all scales}, ranging continuously from (frequent)
low degrees to (infrequent) high degrees.

\textbf{How to quantitatively compare (cumulative) degree distributions?} How do we actually assert that our algorithm
is any good?
One can use standard statistical distance measures like Kolmogorov-Smirnov.
Yet these measures typically ignore
the tail since it contains a negligible fraction of vertices. Consider the following examples. We take a clique
of $n$ vertices and a clique of $n-1$ vertices. It is natural to say that their degree distributions are quite close,
but no popular existing measure would assert that. On the other end, consider a star with $n$ edges, and a matching with $n$
edges. The degree distribution only differs at one ``point", the vertex of degree $n$. Yet we would consider
the degree distributions to be fundamentally different. Most statistical measures would say they are similar, since
they differ at only a single outlier.

An intuitive notion of similarity is closeness in log-log plots, but how do we quantify such a concept? One might
try to approximate degree distributions by closed-form, but fitting procedures are notoriously tricky
for tailed distributions and subject to much error~\cite{ClShNe09}.

\subsection{Main results}
%total number of samples: 31760 = 0.0187218339852 * n

\textbf{The algorithm \degdist:} Our main contribution is a new small-space algorithm \degdist{} that estimates
the ccdh of an input graph stream. The novelty is a new estimator for infrequent
degree counts, which is combined with standard sampling to give
ccdh estimates at all scales. We represent the sampling of \degdist{} through certain \emph{truncated geometric}
random variables. An analysis of their behavior provides the right ``correction" factors
to infer the ccdh from our sampling.
We provide a detailed mathematical analysis of \degdist{} explaining why it accurately
estimates the ccdh. Our analysis falls short of a complete proof, and we rely
on some heuristic arguments for the full argument.

\textbf{Relative Hausdorff distance:} We introduce a new notion of distance between
ccdhs (technically, between any two histograms) called the \emph{Relative Hausdorff (RH) distance}.
This distance avoids the pitfalls of standard measures, and is able to capture the closeness
at \emph{all scales}. Intuitively, a small RH-distance implies that \emph{every} point in one ccdh
is ``close" (up to relative error) to some point in the other ccdh. Put another way, both ccdhs
agree at all scales, and agree on outliers. While this condition is quite stringent, RH distance
is flexible enough to allow for minor errors. It gives a concrete way of quantifying the quality
of \degdist, and empirically establishing convergence of our estimate.

\textbf{Empirical behavior of \degdist:} We run \degdist{} on a wide variety of public graph datasets.
It gives excellent estimates of the ccdh in all our tests, for storage less than 1\% of the stream.
We show example outputs in \Fig{ccdh}, for three different input graphs. In each case, observe
the near perfect match with the true ccdh, at \emph{all} degrees.
We compute the RH distance for numerous runs and demonstrate convergence of \degdist{}'s output
with increasing storage. In all our runs, storage around 1\% of the stream is sufficient for
excellent match in ccdhs (and also for low RH-distance).

\subsection{Related Work}
\label{sec:relatedwork}
Note that we can frame our problem in terms of general histogram estimation.
If one views the input as a stream of vertex labels, then the dh (and ccdh)
is the histogram of label frequencies. There is much work on understanding frequencies in a discrete
stream, but as we detail below, none of this work solves the problem of estimating the ccdh.

Finding frequent items, aka ``heavy hitters," is a classic problem in the data stream model.
Cormode and Hadjieleftheriou \cite{cormode2008frequent} compare three of the most important algorithms: the
\emph{frequent} algorithm~\cite{demaine2002frequency, karp2003simple,BerindeICS10}, the
\emph{lossy counting} algorithm~\cite{manku2002approximate}, and the \emph{space
saving} algorithm~\cite{metwally2005efficient}.\footnote{Other popular algorithms such as CountSketch~\cite{charikar2002finding} and
CountMin~\cite{cormode2005improved} enable frequent items to be identified when the frequency of an item may be incremented and decremented.}
For large degrees, these approaches will give accurate results, but the error term dwarfs the degree at smaller scales. We demonstrate this empirically in Section \ref{sec:expresults}.
Much work has been done in approximating frequency moments~\cite{AlonMS99,IndykW05,KaneNW10,cormode2008frequent}, but they do not
give an estimate for multiple scales. Nor has this work been implemented in practice for large data sets.

Rather than just finding frequent items, Korn et al.~\cite{korn2006modeling} attempt to estimate the entire distribution of elements in the stream. However, in contrast to our work, their approach assumes that the distribution comes from a parameterized family of distributions, e.g., the distribution is Zipfian, and then focuses on estimating the relevant parameters. This approach is only applicable for graphs where the degree distribution is already relatively well understood.
Despite much study and claims, there are \emph{no} conclusive closed-form formulae for real-world degree distributions.
The classic power law fitting work of Clauset et al.~\cite{ClShNe09} argues why most previous methods are not statistically robust,
and how one needs strong independence assumptions to get rigorous results. Therefore, \degdist{} makes no closed form assumption
on the input stream.

Over the last ten years, there has been a growing body of work focused on processing graphs in the data stream model. See~\cite{mcgregor2014graph} for a summary of recent work on graph streaming and sketching. This work has included problems such as the number of triangles and related quantities such as the transitivity coefficient \cite{jha2013space,PavanTTW13,ahmed2014graph}, estimating the connectivity properties of a graph \cite{GuhaMT15}, and solving combinatorial problems such as computing large matchings \cite{KapralovKS14,McGregor05}.
Cormode and Muthukrishnan considered estimating properties of the degree
distribution in multigraphs but not the distribution itself\cite{CormodeM05}.

Closest to this work is the series of graph sampling papers by Ahmed et al.~\cite{ayman2013ads,ahmed12socialnets,ahmed2014network,ahmed2014graph}.
Their work focuses on estimating many properties (as opposed to a single property) with a fixed sampling method, and they study various sampling schemes.
The results on estimating ccdhs typically use 20-30\% of the stream, with weaker empirical results~\cite{ayman2013ads}.
The recent Graph Sample and Hold framework gives extremely strong results for triangle counting~\cite{ahmed2014graph}, but is
not applied for the ccdh.
This technique is closely related to an approach for estimating frequency moments
\cite{AlonMS99,braverman2013approximating}. Our sampling approach is also similar, and our main
contribution is in the actual estimation procedure.

\section{The algorithm}

The algorithm \degdist{} has two parts: \update{} and \est.
The procedure \update{} is called for every edge in the stream, and simply
updates the data structures. The procedure \est{} is called at the end
of the stream to get an estimate of $\{N(d)\}$. In what follows,
the subscript $h$ refers to ``head" and $t$ is ``tail".

The algorithm \degdist{} requires two parameters, $p_h$ and $p_t$, which
are probabilities. These decide the storage requirements of the algorithm,
as explained later. For convenience, we will assume these are global variables,
and will not pass them around to each function.

We will assume the existence of a hash function $hash$ that maps strings
uniformly to $[0,1]$.

{\bf Data Structures:} There are two sets of vertices $S_h$ and $S_t$,
and corresponding maps $\ct_h:S_h \mapsto \NN$ and $\ct_t:S_t \mapsto \NN$.
Again, we assume these are global variables.

{\bf The procedure \update:} This updates the data structures for each edge in the stream.
Consider edge $(u,v)$ in the stream.
If $v \in S_h$, the $\ct_h(v)$ is incremented (analogously for $S_t$). Now for the critical
difference between $S_h$ and $S_t$. If $v \notin S_h$ and if $\hash(v) \leq p_h$,
then $v$ is added to $S_h$. If $v \notin S_t$: we insert $v$ to $S_t$ with probability $p_t$.
(The entire operation above is also done for $u$.)
Note the difference: for $S_h$, we essentially flip a random coin for the vertex. For $S_t$, we flip a coin for the edge.
Intuitively, $S_h$ is maintaining a uniform random set of vertices. On the other hand,
$S_t$ maintains sample of vertices biased towards higher degree.

{\bf The procedure \est:} This procedure uses $S_h, S_t, \ct_h, \ct_t$ to output
an estimate $\{\hN(d)\}$ for the ccdh of $G$. We set $\C_h(r)$ to be the number
of vertices in $S_h$ with $\ct_h(\cdot)$ value of $r$ (similarly for $\C_t(r)$).
One can think of this as the ``observed" degree distribution. The scaling of
$\C_h(r)$ is straightforward: we simply consider $\C_h(r)/p_h$ to be an estimate
of $n(r)$.  By summing these appropriately, we get an estimate (the head
estimate) of $N(r)$.

For $\C_t$, we first do an additive ``correction". So we set $\tC_t(r) = \C_t(r - \ell(r))$,
where $\ell(r)$ is a correction factor. The explanation of this factor is
provided in Section~\ref{sec:math}.
Then, we do a biased scaling and consider $\tC_t(r)/(1-(1-p_t)^r)$ as an estimate of $n(r)$.
Again, by taking partial sums, we have an estimate (the tail estimate) of $N(r)$.

Observe that we have two different estimates of $N(r)$. We prove in our mathematical analysis
that the former is accurate for the head of the distribution, while the latter is appropriate
for the tail. This distinction is made by $d_{thr}$, which is chosen to ensure
that the first estimate has low variance. Hence, for all degrees less than $d_{thr}$, we
use the head estimate, and for the remaining, we use the tail estimate.

\medskip
We now give a formal description of the algorithm.

%%
%% The basic structure of the algorithm is quite simple. It maintains two sets
%% of vertices $S_h$ and $S_t$, which are updates as the stream proceeds.
%% For each vertex in $v \in S = S_h \cup S_t$, a count $\ct(v)$ is maintained.
%% This process is fairly straightforward, and is shown in the process
%% \update. The tricky part is get estimates
%% for the cumulative degree distribution $F(d)$ from these counts. This
%% is given in \est.
%%
%% Abusing notation, we will use $hash$ to map vertices and edges.
%%

\begin{algorithm}
\caption{\degdist$(p_h,p_t)$} \label{alg:degdist}
\DontPrintSemicolon
Initialize empty sets $S_h$ and $S_t$ and empty mappings $\ct_h$ and $\ct_t$.\;
For each edge $e_i = (u,v)$ in the stream,\;
\ \ \ \ Call \update$(u,v)$.\;
Call \est{} to get output estimate for $\{N(d)\}$.\;
\end{algorithm}

\begin{algorithm}
\caption{\update$(u,v)$} \label{alg:update}
\DontPrintSemicolon
If $u \in S_h$, increment $\ct_h(u)$.\;
If $u \notin S_h$: if $hash[u] < p_h$, insert $u$ in $S_h$ and set $\ct_h(u) = 1$.\;
If $u \in S_t$, increment $\ct_t(u)$.\;
If $u \notin S_t$: with probability $p_t$, insert $u$ in $S_t$ and set $\ct_t(u) = 1$.\;
(Repeat above steps for $v$.)
\end{algorithm}

For fixed $p_t \in (0,1)$, we define $\ell(r)$ to be:
$$\Big\lceil\frac{1 - p_t - (1-p_t)^{r+1} - rp_t(1-p_t)^r}{p_t(1-(1-p_t)^r)}\Big\rceil$$

\begin{algorithm}
\caption{\est} \label{alg:est}
\DontPrintSemicolon
Let $\C_h(r)$ be the number of vertices in $S_h$ with count exactly $r$. (Similarly, define $\C_t(r)$).\;
For all counts $r$, set $\tC_h(r) = \C_h(r-\ell(r))$.\;
For all counts $r$:\;
\ \ \ \ Set $g_h(r) = \C_h(r)/p_h$.\;
\ \ \ \ Set $g_t(r) = \tC_t(r)/[1-(1-p_t)^r]$.\;
Set $d_{thr}$ to be largest $d$ such that $\sum_{r \geq d} g_h(r) \geq 50/p_h$.\;
For all degrees $d$:\;
\ \ \ \ If $d \leq d_{thr}$, set $\hN(d) = \sum_{r \geq d} g_h(r)$.\;
\ \ \ \ If $d > d_{thr}$, set $\hN(d) = \sum_{r \geq d} g_t(r)$.\;
\end{algorithm}

\section{Mathematical Analysis} \label{sec:math}

We abstract out the behavior of the algorithm in a series of claims.
We stress that all our theorems are independent of graph stream order, and hence
\est{} works for all orderings.

\begin{definition} \label{def:geo} For any positive integer $s$ and $p \in (0,1)$, the \emph{truncated
geometric distribution} $\tg_{p,s}$ has the pdf: $\forall 0 \leq k \leq s-1$, $\Pr[X = k] = p(1-p)^k/[1-(1-p)^s]$.
\end{definition}

Observe that as $s \rightarrow \infty$, this is a standard geometric random variable.

\begin{lemma} \label{lem:samp-head} For every $v \in [n]$, $v$ is inserted in $S_h$ independently with probability $p_h$. Conditioned on $v \in S_h$, $\ct(v) = d_v$.
\end{lemma}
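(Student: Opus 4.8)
The plan is to unwind the definition of \update{} and check that each vertex's fate in $S_h$ is governed by a single, edge-independent coin flip. First I would observe that the only place $S_h$ is modified is line 2 of \update: a vertex $u$ is inserted into $S_h$ precisely the first time it appears as an endpoint of a stream edge \emph{and} the test $hash[u] < p_h$ succeeds. Since $hash$ maps each distinct label uniformly and independently to $[0,1]$, the event $\{hash(u) < p_h\}$ has probability exactly $p_h$, these events are mutually independent across distinct $u$, and crucially the value $hash(u)$ is fixed once and for all --- it does not depend on the stream order or on which edges arrive. Hence, every $v \in [n]$ with $d_v \geq 1$ is inserted into $S_h$ iff $hash(v) < p_h$, an event of probability $p_h$ independent of all other vertices. (Isolated vertices $v$ with $d_v = 0$ never appear in the stream and so are never inserted; but since we only care about the degree histogram restricted to vertices that actually appear, and $n(0)$ is not meaningfully defined for an edge stream anyway, this boundary case is vacuous --- or one can simply note the claim is stated for the vertices the algorithm can possibly see.)

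Next I would verify the conditional statement: given $v \in S_h$, the stored count $\ct_h(v)$ equals $d_v$. Once $v$ is inserted (on its first occurrence), line 1 of \update{} increments $\ct_h(v)$ on \emph{every subsequent} occurrence of $v$ as an endpoint, and line 2 set the initial value to $1$ on the insertion step itself. So after processing the whole stream, $\ct_h(v)$ counts exactly the number of stream edges incident to $v$. Because the stream has no repeated edges, this count is exactly $d_v$. A minor subtlety is the self-loop / repeated-endpoint-within-an-edge issue: if an edge $(u,v)$ has $u=v$, \update{} would process the endpoint twice; I would note that the input model excludes this (simple graph, no repetitions), so each edge contributes exactly one increment to each of its two distinct endpoints, and the total is $d_v$.

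There is essentially no hard step here --- the lemma is a bookkeeping statement --- but the one place to be careful is the \emph{independence and order-invariance} claim, which is exactly what makes the hash-based sampling work and what fails for a naive per-edge coin flip. The key point to emphasize is that the decision to include $v$ is a deterministic function of the single random value $hash(v)$, so the sampling of $S_h$ is equivalent to: draw $hash(v) \sim \mathrm{Unif}[0,1]$ i.i.d.\ for each $v$, include $v$ iff $hash(v) < p_h$. This makes $\mathbf{1}[v \in S_h]$ a product of independent $\mathrm{Bernoulli}(p_h)$ indicators, entirely independent of the arrival order of edges, which is the property the downstream variance analysis of the head estimator $g_h$ relies on.
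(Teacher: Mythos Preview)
Your proposal is correct and follows essentially the same approach as the paper's own proof: both argue that since $hash(v)$ is uniform in $[0,1]$, the event $\{hash(v) < p_h\}$ has probability $p_h$ and is determined once for $v$ (independently of other vertices and of the stream order), and that when it holds $v$ is inserted on its first occurrence so $\ct_h(v)$ ends up counting all $d_v$ incident edges. Your version is just more detailed about boundary cases (isolated vertices, self-loops) and about why independence holds, but the core argument is identical.
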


\begin{proof} We assume that $\hash$ is a uniform random function, so $\hash(v)$ is uniformly distributed in $(0,1)$.
The probability that $\hash(v) \leq p_h$ is exactly $p_h$. Observe that if $\hash(v) \leq p_h$, then $v$
is inserted in $S_h$ at the very first occurrence of $v$ in the stream. Hence $\ct(v) = d(v)$, whenever $v \in S_h$.
\end{proof}

\begin{lemma} \label{lem:samp-tail} For every $v \in [n]$, $v$ is inserted in $S_t$ independently with probability $1-(1-p_t)^{d_v}$. Conditioned on $v \in S_t$, $\ct(v) = d_v - X$, where $X \sim \tg_{p_t,d_v}$.
\end{lemma}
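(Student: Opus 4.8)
The plan is to reduce the evolution of $v$'s membership in $S_t$ to a sequence of independent Bernoulli trials stopped at the first success, and then read off both claims from elementary facts about geometric random variables. Fix $v \in [n]$. Since the stream has no repeated edges, $v$ appears in exactly $d_v$ edges; list its \emph{occurrences} $1,2,\dots,d_v$ in stream order. Examining \update: as long as $v \notin S_t$, each occurrence of $v$ triggers exactly one independent coin that inserts $v$ (and sets $\ct_t(v) = 1$) with probability $p_t$; once $v \in S_t$, no further coins are flipped for $v$ and every remaining occurrence just increments $\ct_t(v)$. So I would introduce i.i.d.\ $Y_1,\dots,Y_{d_v}$ with $Y_i \sim \mathrm{Bernoulli}(p_t)$, where $Y_i$ is the coin that \emph{would} be flipped at $v$'s $i$-th occurrence, and set $T = \min\{i : Y_i = 1\}$ (with $T=\infty$ if all $d_v$ trials fail). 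The key structural observation is that $v \in S_t$ iff $T \le d_v$, and on that event $v$ is inserted at occurrence $T$, so $\ct_t(v) = 1 + (d_v - T) = d_v - (T-1)$.

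Given this reduction, the first part is immediate: $\Pr[v \in S_t] = \Pr[T \le d_v] = 1 - (1-p_t)^{d_v}$. Independence across vertices holds because the event $\{v \in S_t\}$ is measurable with respect to $v$'s own coins $Y_1,\dots,Y_{d_v}$ only; this is the point at which it matters that $S_t$ uses fresh per-edge randomness (rather than a single hash of the vertex, as $S_h$ does), so distinct vertices' coin sequences are genuinely independent.

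For the second part, I would condition on $v \in S_t$, i.e.\ on $T \le d_v$, set $X := T-1$ so that $\ct_t(v) = d_v - X$, and compute, for $0 \le k \le d_v - 1$,
$$\Pr[X = k \mid v \in S_t] = \frac{\Pr[Y_1=\cdots=Y_k=0,\ Y_{k+1}=1]}{\Pr[T \le d_v]} = \frac{p_t(1-p_t)^k}{1-(1-p_t)^{d_v}},$$
which is exactly the pdf of $\tg_{p_t,d_v}$ from Definition~\ref{def:geo}. Hence, conditioned on $v \in S_t$, $\ct(v) = d_v - X$ with $X \sim \tg_{p_t,d_v}$, as claimed.

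The calculation is routine; the only step needing genuine care is the modeling one — making precise that, despite arbitrary interleaving of $v$'s occurrences with other edges and with other vertices' updates, the insertion rule for $v$ is exactly ``flip a $p_t$-coin at each of $d_v$ occurrences, stopping at the first heads,'' and that these coin sequences are independent across vertices. Everything downstream (the survival probability $(1-p_t)^{d_v}$ and the truncated-geometric conditional law) then follows with no further work.
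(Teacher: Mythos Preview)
Your proof is correct and follows essentially the same approach as the paper's own proof: both identify $v$'s insertion index $b$ (your $T$) via independent $p_t$-coin flips at each of the $d_v$ occurrences, note that $\ct_t(v)=d_v-(b-1)$, and compute the conditional law of $b-1$ as $\tg_{p_t,d_v}$. Your version is more explicit about the i.i.d.\ Bernoulli coupling and about why independence across vertices holds, which the paper's proof asserts in the statement but does not spell out.
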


\begin{proof} There are $d_v$ occurrences of $v$ in the stream. The probability
of $v$ being added in the $b$th occurrence is $p_t(1-p_t)^{b-1}$. When this happens,
$\ct(v) = d_v - (b-1)$. The probability that $v$ is never added is $\sum_{b=1}^{d_v} p_t(1-p_t)^{b-1} = (1-p_t)^{d_v}$.
Conditioned on $v$ being added to $S_t$, the probability of $v$ being added
in the $b$th occurrence is exactly $p_t(1-p_t)^{b-1}/[1-(1-p_t)^{d_v}]$.
So $b-1$ is distributed as $\tg_{p_t,d_v}$.
\end{proof}

\begin{lemma} \label{lem:geo} The expected value of $X \sim \tg_{p,d}$ is $\frac{1 - p - (1-p)^{d+1} - dp(1-p)^d}{p(1-(1-p)^d)}$.
\end{lemma}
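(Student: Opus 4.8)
The plan is to compute $\E[X]$ directly from the pdf of $\tg_{p,d}$ given in \Def{geo}. Writing $q = 1-p$ for brevity, we have
$$\E[X] = \sum_{k=0}^{d-1} k\cdot\frac{pq^k}{1-q^d} \;=\; \frac{p}{1-q^d}\sum_{k=0}^{d-1} kq^k,$$
so the whole task reduces to finding a closed form for the finite sum $T(q)\eqdef\sum_{k=0}^{d-1} kq^k$.

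To evaluate $T(q)$, I would start from the geometric identity $\sum_{k=0}^{d-1} q^k = (1-q^d)/(1-q)$ and differentiate both sides with respect to $q$. The left side becomes $\sum_{k=1}^{d-1} kq^{k-1}$, and applying the quotient rule to the right side and simplifying gives
$$\sum_{k=1}^{d-1} kq^{k-1} \;=\; \frac{1 - q^d - dq^{d-1}(1-q)}{(1-q)^2}.$$
Multiplying through by $q$ and substituting $1-q = p$ yields $T(q) = \big(q - q^{d+1} - dpq^{d}\big)/p^2$. Plugging this back into the expression for $\E[X]$, the powers of $p$ cancel:
$$\E[X] \;=\; \frac{p}{1-q^d}\cdot\frac{q - q^{d+1} - dpq^{d}}{p^2} \;=\; \frac{q - q^{d+1} - dpq^{d}}{p(1-q^d)},$$
and replacing $q$ by $1-p$ recovers exactly the claimed formula $\frac{1 - p - (1-p)^{d+1} - dp(1-p)^{d}}{p(1-(1-p)^d)}$.

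There is no real conceptual obstacle here; the only thing to be careful about is the routine algebra of differentiating the finite geometric series and correctly tracking the boundary term $dq^{d-1}$ that the truncation introduces — this is precisely what distinguishes the truncated mean from the $1/p$ mean of an untruncated geometric. As a consistency check I would verify the $d\to\infty$ limit reduces to $(1-p)/p$ (the mean of the standard geometric on $\{0,1,2,\dots\}$, since $(1-p)^{d+1}\to 0$ and $dp(1-p)^d\to 0$), and the $d=1$ case where $X\equiv 0$ forces the numerator $1-p-(1-p)^2-p(1-p)$ to vanish, which it does. An alternative route that avoids differentiation is the tail-sum formula $\E[X] = \sum_{j\ge 1}\Pr[X\ge j]$: each $\Pr[X\ge j]$ is itself a truncated geometric sum, and summing over $j$ produces the same closed form after one additional geometric summation.
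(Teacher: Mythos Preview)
Your proof is correct and follows essentially the same route as the paper: both compute $\E[X]=\frac{p}{1-(1-p)^d}\sum_k k(1-p)^k$ and reduce to a closed form for the finite arithmetico-geometric sum. The only difference is cosmetic --- the paper invokes the known arithmetico-geometric series identity directly, whereas you rederive it by differentiating the finite geometric series; your added sanity checks ($d\to\infty$ and $d=1$) are a nice touch but not needed.
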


\begin{proof} Using the bound for the sum of an arithmetico-geometric series:
\begin{align*}
\frac{p}{1-(1-p)^d} \sum_{k=0}^d k(1-p)^k
&= \frac{p}{1-(1-p)^d} \left( \frac{(1-d)(1-p)^d}{p} + \frac{(1-p)-(1-p)^d}{p^2} \right)\\
&= \frac{1 - p - (1-p)^{d+1} - dp(1-p)^d}{p(1-(1-p)^d)}.
\end{align*}
\end{proof}

This expression is exactly (up to rounding) $\ell(d)$.
Conditioned on $v \in S_t$, $\EX[\ct(v)]$ is $d_v$ minus a ``loss" term, which is precisely
the expression in \Lem{geo}. That should hopefully explain the use of $\ell(d)$ in our algorithm.
We make the (admittedly wrong) assumption that \emph{every} vertex of degree $d$ in $S_t$
``loses" exactly the expected loss. In other words, we assume that $\ct(v)$ is $\EX[\ct(v)]$.
To infer the number of degree $d$ vertices in $S_t$, we add back the expected loss
to each vertex in $v$. That is why we set $\tC_t(r) = \C_t(r-\ell(r))$.

\medskip

It is fairly easy to bound the space and running time of \degdist.

\begin{theorem} \label{thm:time} The expected space used by \degdist{} is $O(p_h n + p_t m)$.
The expected running time of \update{} is $O(1)$, and the expected running time of \est{} is $O(p_h n + p_t m)$.
\end{theorem}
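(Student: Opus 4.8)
The plan is to analyze the space and running-time of \degdist{} by bounding the sizes of the data structures $S_h$ and $S_t$ in expectation, and then accounting for the per-edge and end-of-stream costs. The key observation is that \Lem{samp-head} and \Lem{samp-tail} already give us the exact probability that each vertex $v$ is inserted into $S_h$ (namely $p_h$) and into $S_t$ (namely $1-(1-p_t)^{d_v}$). These are the only quantities we need: the auxiliary maps $\ct_h$ and $\ct_t$ have domains $S_h$ and $S_t$ respectively, so their sizes are dominated by $|S_h|$ and $|S_t|$.

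First I would bound $\EX[|S_h|]$. By \Lem{samp-head}, each of the $n$ vertices enters $S_h$ independently with probability $p_h$, so by linearity of expectation $\EX[|S_h|] = p_h n$. Second, I would bound $\EX[|S_t|]$. By \Lem{samp-tail}, vertex $v$ enters $S_t$ with probability $1-(1-p_t)^{d_v} \leq \min(1, p_t d_v) \leq p_t d_v$, using the union bound (or Bernoulli's inequality) on the $d_v$ independent coin flips. Hence $\EX[|S_t|] \leq p_t \sum_{v} d_v = p_t \cdot 2m = O(p_t m)$. Combining, the total expected storage is $O(p_h n + p_t m)$, since each element of $S_h$ or $S_t$ contributes $O(1)$ words (a label and a counter).

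For the running time of \update{}, each invocation does a constant number of hash evaluations, membership checks, coin flips, and counter increments on $S_h$ and $S_t$; assuming $O(1)$ expected-time hashing and dictionary operations, this is $O(1)$ per edge. For \est{}, the work consists of scanning the counts present in $S_h$ and $S_t$ to form $\C_h$ and $\C_t$, applying the shift $\ell(r)$ and the scalings, and taking partial sums; all of this is linear in the number of distinct count values plus the range of degrees actually observed, which is $O(|S_h| + |S_t|)$ in the worst case. Taking expectations and applying the bounds above gives expected running time $O(p_h n + p_t m)$ for \est.

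The only mildly delicate point is the inequality $1-(1-p_t)^{d_v} \leq p_t d_v$; this is immediate from the union bound over the $d_v$ edge-coin-flips (the event ``$v$ enters $S_t$'' is the union of $d_v$ events each of probability at most $p_t$), so there is no real obstacle. A secondary subtlety is that these are \emph{expected} space and time bounds, not worst-case — the statement is phrased that way, so no concentration argument is needed, though one could add a Chernoff bound on $|S_h|$ and a Bernstein/Chernoff-type bound on $|S_t|$ (the latter is a sum of independent indicators) if a high-probability guarantee were desired.
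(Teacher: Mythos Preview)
Your proposal is correct and essentially mirrors the paper's own proof. The only cosmetic difference is in bounding $\EX[|S_t|]$: the paper argues directly per edge (each edge adds at most one endpoint to $S_t$ with probability $p_t$, giving $O(p_t m)$), whereas you go through the per-vertex probability from \Lem{samp-tail} and apply $1-(1-p_t)^{d_v}\le p_t d_v$; these are the same union-bound computation viewed from two sides, and the paper in fact mentions your refined sum $\sum_d n(d)[1-(1-p_t)^d]$ as a parenthetical.
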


\begin{proof} We will store all sets as hash tables, to ensure $O(1)$ updates.
By \Lem{samp-head}, each vertex is added to $S_h$ with probability $p_h$. Hence, the expected size
of $S_h$ is $O(p_h n)$. For each edge in the stream, we potentially add a vertex
to $S_t$ with probability $p_t$. Hence, the expected size of $S_t$ is $O(p_t m)$. (This is a gross
upper bound, and a refined bound based on \Lem{samp-tail} would be $\sum_d n(d) [1-(1-p)^d]$.)

The processing of \update{} only requires addition in set and count increments,
and requires $O(1)$ time. The procedure \est{} runs in time linear in the sets $S_h$ and $S_t$.
\end{proof}

\subsection{The estimators} \label{sec:est}

For the analysis of our estimators, we need to introduce various error parameters. Natually,
the actual implementation \est{} simply sets these to be fixed constants, so we make
slight modifications and assumptions for convenience of analysis.

Let $\eps = (0,1)$ be an error parameter, and let $c$ be a sufficiently large constant.
\begin{asparaitem}
    \item We set $d_{thr}$ to be the largest $d$ such that $\sum_{r \geq d} g_h(r) \geq (c(\log n)/\eps^2)/p_h$.
    (In the implementation, we hardcoded $c/\eps^2$ to be $50$.)
    \item We assume that $p_t$ is chosen so that $d_{thr} \geq \log(1/\eps)/p_t$.
\end{asparaitem}

\medskip

We begin with the analysis of the head estimator, which is a straightforward Chernoff bound application.

\begin{lemma} \label{lem:head} For all $d \leq d_{thr}$, $\EX[\hN(d)] = N(d)$.
With probability $> 1 - 1/n$, for all $d \leq d_{thr}$, $|\hN(d) - N(d)| \leq \eps N(d)$.
\end{lemma}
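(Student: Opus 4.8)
The plan is to analyze $\hN(d) = \sum_{r \geq d} g_h(r) = \sum_{r \geq d} \C_h(r)/p_h$ directly. First I would observe that, by \Lem{samp-head}, each vertex $v \in [n]$ is placed in $S_h$ independently with probability $p_h$, and conditioned on membership its recorded count $\ct_h(v)$ equals its true degree $d_v$ exactly. Therefore $\sum_{r \geq d} \C_h(r)$ is exactly the number of vertices $v$ with $d_v \geq d$ that landed in $S_h$, i.e.\ a sum of $N(d)$ independent Bernoulli$(p_h)$ indicators. This immediately gives $\EX[\sum_{r\geq d}\C_h(r)] = p_h N(d)$, hence $\EX[\hN(d)] = N(d)$ for every $d$ (in particular for $d \le d_{thr}$), which is the first assertion.

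For the concentration statement I would apply a multiplicative Chernoff bound to $Z_d := \sum_{r \geq d} \C_h(r)$, which is a sum of independent $\{0,1\}$ variables with mean $\mu_d = p_h N(d)$. The bound gives $\Pr[|Z_d - \mu_d| > \eps \mu_d] \le 2\exp(-\eps^2 \mu_d / 3)$, and dividing through by $p_h$ translates this into $\Pr[|\hN(d) - N(d)| > \eps N(d)] \le 2\exp(-\eps^2 p_h N(d)/3)$. The key point is to lower-bound $\mu_d$ for $d \le d_{thr}$: by the definition of $d_{thr}$ in the analysis section, $\sum_{r \geq d_{thr}} g_h(r) \geq (c (\log n)/\eps^2)/p_h$, and since $\hN(\cdot)$ (equivalently $\sum_{r \ge \cdot} g_h(r)$) is monotonically non-increasing in $d$, for every $d \le d_{thr}$ we have $Z_d \ge Z_{d_{thr}}$, so $\sum_{r\ge d} g_h(r) \ge (c(\log n)/\eps^2)/p_h$, i.e.\ $Z_d = p_h \sum_{r \ge d} g_h(r) \ge c(\log n)/\eps^2$. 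Plugging $\mu_d \ge c\log n/\eps^2$ into the Chernoff bound yields a failure probability at most $2\exp(-(c/3)\log n) = 2 n^{-c/3}$, which is at most $n^{-2}$ for $c$ a sufficiently large constant.

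Finally I would take a union bound over all relevant thresholds $d$. There are at most $n$ distinct degree values (actually at most the maximum degree, but $\le n$ suffices), so the total failure probability is at most $n \cdot n^{-2} = 1/n$, giving the claimed "with probability $> 1 - 1/n$, for all $d \le d_{thr}$, $|\hN(d) - N(d)| \le \eps N(d)$." One subtlety worth being careful about: $d_{thr}$ is itself a random variable (it depends on the realized $g_h$), so strictly speaking the event "$|\hN(d)-N(d)| \le \eps N(d)$ for all $d \le d_{thr}$" mixes the randomness in the estimate with the randomness in the cutoff. The cleanest fix is to note the Chernoff bound above holds simultaneously (via the union bound) for every fixed $d$ with $N(d) \ge c\log n/(\eps^2 p_h)$; on that good event, $Z_d$ is within a $(1\pm\eps)$ factor of $p_h N(d)$ for all such $d$, and in particular the random $d_{thr}$ — which by construction satisfies $Z_{d_{thr}} \ge c\log n/\eps^2$, hence $N(d_{thr}) \gtrsim c\log n/(\eps^2 p_h)$ up to the $(1+\eps)$ slack — falls in the range where the bound applies, as do all $d \le d_{thr}$. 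The main (only mildly delicate) obstacle is thus handling this dependence of $d_{thr}$ on the sample cleanly; everything else is a routine Chernoff-plus-union-bound argument.
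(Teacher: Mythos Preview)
Your expectation computation and the basic Chernoff step are fine, and you correctly identify the delicate point that $d_{thr}$ is random. But your handling of that point has a gap. The defining property of $d_{thr}$ is a lower bound on the \emph{observed} count $Z_{d_{thr}}$, not on its mean $\mu_{d_{thr}} = p_h N(d_{thr})$; in the body of your argument you simply write ``$Z_d \ge c(\log n)/\eps^2$'' and then plug this in as if it were $\mu_d$. Your attempted fix in the last paragraph is circular: you apply the multiplicative Chernoff bound only to degrees $d$ with $N(d) \ge c(\log n)/(\eps^2 p_h)$, and then infer from $Z_{d_{thr}} \ge c(\log n)/\eps^2$ that $N(d_{thr})$ must be large --- but that inference already presupposes concentration at $d_{thr}$, which you have only established for $d$ with large $N(d)$. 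Nothing in your argument rules out the event that $d_{thr}$ lands at some $d$ with small $N(d)$ but an anomalously large $Z_d$.

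The paper closes exactly this gap with a second, one-sided tail bound: for every $d$ with $\EX[X_d] < c(\log n)/(3\eps^2)$, it invokes $\Pr[X_d \ge B] < 2^{-B}$ with $B = c(\log n)/\eps^2 \ge e\,\EX[X_d]$, so with high probability each such $d$ has $X_d < c(\log n)/\eps^2$. Union-bounding both families of events over all $d$ gives total failure probability $< 1/n$. On the resulting good event, any $d$ with small mean satisfies $X_d < c(\log n)/\eps^2$; since $X_{d_{thr}} \ge c(\log n)/\eps^2$ by construction, $d_{thr}$ must be a degree with $\EX[X_{d_{thr}}] \ge c(\log n)/(3\eps^2)$, and then the first (multiplicative) Chernoff bound delivers $|\hN(d) - N(d)| \le \eps N(d)$ for all $d \le d_{thr}$. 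Adding this upper-tail step for small-mean degrees is the missing ingredient in your proposal.
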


\begin{proof} Fix some $d \leq d_{thr}$. Note that the head estimator
is used for $\hN(d)$. Also, $\sum_{r \geq d} g_h(r)$ is precisely the number of vertices of degree at least $d$ in $S_h$.
For convenience, denote this by $X_d$, and observe that it is monotonically decreasing in $d$.
By \Lem{samp-head}, each vertex is added independently to $S_h$ with probability $p_h$.
Thus, $\EX[X_d] = p_h \cdot N(d)$.
Note that $\hN(d)$ is precisely $X_d/p_h$, so $\EX[\hN(d)] = N(d)$.

Since $d_{thr}$ is itself a random variable, we need a little care to prove the lemma.
Observe that $X_d$ is well-defined for all $d$, and is the sum of Bernoulli random
variables. By a multiplicative Chernoff bound (refer to Theorem 1.1 in~\cite{DuPa}), $\Pr[|X_d - \EX[X_d]| \leq \eps \EX[X_d]] \leq 2\exp(-\eps^2 \EX[X_d]/3)$.
Furthermore, by an alternate bound, if $B \geq e\EX[X]$, then $\Pr[X \geq B] < 2^{-B}$.

When $\EX[X_d] = p_h \cdot N(d) \geq (c(\log n)/3\eps^2)$, apply the first bound.
When $\EX[X_d] < (c(\log n)/3\eps^2)$, apply the second bound with $B = c(\log n)/\eps^2$.
Finally, we apply the union bound over all errors, which a calculation shows to be $< 1/n$.
Hence, for any $d$ where $\EX[X_d] < c(\log n)/3\eps^2$, $X_d < c(\log n)/\eps^2$.
So, $d_{thr}$ must be smaller than any such degree. Thus, for all $d \leq d_{thr}$,
$\EX[X_d] \geq c(\log n)/3\eps^2$, and the first Chernoff bound gives the desired
concentration.
\end{proof}

The more challenging part is to analyze the tail estimator. We fall short of giving a complete
proof that it works. Nonetheless, we provide
some mathematical evidence of its correctness. We provide a high level explanation
of the math that follows. We warn the reader that we shall switch between estimates for $N(d)$
and $n(d)$.

The weakness of the head estimator is made clear in the proof of the previous lemma.
The Chernoff bounds says that the error probability of estimating of $N(d)$ is roughly $\exp(-p_h\cdot N(d))$.
This goes to $1$ as $N(d)$ becomes smaller than $1/p_h$. That is precisely what happens
in the tail of the degree distribution, which contains fewer vertices of higher degree.
In general, mild fluctuations in estimates for low degree vertices is ok (there are many of them),
but even a little wagging in the tail estimates creates significant error.

But high degree vertices are more likely to be in $S_t$ by \Lem{samp-tail}.
Let $S_t(d)$ denote the subset of degree $d$ vertices in $S$.
We show in \Lem{highdeg} how to get an estimate of $n(d)$ from $|S_t(d)|$,
where the error probabilities are roughly $\exp(-p_t\cdot d \cdot n(d))$.
\emph{Note the extra $d$ factor}. As long as $d\cdot n(d) \geq 1/p_t$, we can
hope for concentration. In other words, even though high degree vertices are infrequent,
it is provably possible to get accurate estimates for these counts.

Unfortunately, it is not clear how to estimate $|S_t(d)|$, since $\ct_t(v)$
is quite different from $d_v$. As mentioned earlier, we
make the (admittedly erroneous) assumption that $\ct_t(v) = d_v - \EX_{X \sim \tg_{p_t,d_v}}[X]$,
based on \Lem{samp-tail} and \Lem{geo}. This is used to predict the actual degree
of $v \in S_t$, based on $\ct_t(v)$. While this assumption is wrong because the truncated
geometric distribution has large variance, in practice, it works quite well.

In \est, the proxy for $|S_t(d)|$ is given by $\tC_t(d)$. We show
that the ``ccdh" (or partial sums) of $\tC_t(d)$ approximates those of $|S_t(d)|$.
In other words, we can a get a rough approximation for the number of vertices
of degree at least $d$ in $S_t$. This is what is proven in \Thm{tc} and the subsequent calculations.

We now proceed with the formal proofs. The following lemma provides an appropriate concentration bound
for estimating $n(d)$ from $|S_t(d)|$.

\begin{lemma} \label{lem:highdeg} For all $d$, $\EX[|S_t(d)|] = (1-(1-p_t)^d)n(d)$. For all
$d \geq d_{thr}$ and sufficiently small $p_t$: with probability at least $1-2\exp(-\eps p_t\cdot d\cdot n(d)/16)$,
$\Big||S_t(d)| - \EX[|S_t(d)|]\Big| \leq \eps \EX[|S_t(d)|]$.
\end{lemma}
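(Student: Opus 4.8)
The plan is to prove the expectation identity by linearity and then establish concentration via a multiplicative Chernoff bound, carefully handling the fact that $d_{thr}$ is a random variable (as was done in \Lem{head}).

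\textbf{Expectation.} By \Lem{samp-tail}, each vertex $v$ of degree $d$ is inserted into $S_t$ independently with probability $1 - (1-p_t)^d$. The random variable $|S_t(d)|$ is thus a sum of $n(d)$ i.i.d.\ Bernoulli$(1-(1-p_t)^d)$ variables, so $\EX[|S_t(d)|] = (1-(1-p_t)^d) n(d)$ immediately.

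\textbf{Concentration.} Write $\mu_d = \EX[|S_t(d)|] = (1-(1-p_t)^d)n(d)$. Since $|S_t(d)|$ is a sum of independent Bernoullis, the multiplicative Chernoff bound (Theorem 1.1 in~\cite{DuPa}) gives $\Pr[\,||S_t(d)| - \mu_d| > \eps \mu_d\,] \leq 2\exp(-\eps^2 \mu_d / 3)$. The remaining work is to show that for $d \geq d_{thr}$ the exponent $\eps^2 \mu_d/3$ is at least $\eps p_t \cdot d \cdot n(d)/16$, i.e.\ that $\mu_d \gtrsim p_t\, d\, n(d)/\eps$. This is where the hypothesis $d_{thr} \geq \log(1/\eps)/p_t$ enters: for $d \geq d_{thr}$ we have $p_t d \geq \log(1/\eps)$, hence $(1-p_t)^d \leq e^{-p_t d} \leq \eps$, and therefore $1 - (1-p_t)^d \geq 1 - \eps$. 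Thus $\mu_d = (1-(1-p_t)^d) n(d) \geq (1-\eps) n(d)$. Now I need to relate $n(d)$ to $p_t \cdot d \cdot n(d)$: writing $\eps^2 \mu_d/3 \geq \eps^2(1-\eps)n(d)/3$ and comparing to $\eps p_t d n(d)/16$, it suffices that $\eps(1-\eps)/3 \geq p_t d/16 \cdot \eps/\eps$... this only works when $p_t d$ is bounded, which fails for large $d$. So instead I should not discard the $(1-p_t)^d$ factor so crudely. A cleaner route: use $1 - (1-p_t)^x \geq \min(1, p_t x)/2$ for appropriate ranges. For $p_t d \leq 1$ we get $\mu_d \geq p_t d\, n(d)/2$, so $\eps^2\mu_d/3 \geq \eps^2 p_t d n(d)/6 \geq \eps p_t d n(d)/16$ using $\eps \le 1$; for $p_t d \geq 1$ we have $\mu_d \geq (1-1/e)n(d) \geq n(d)/2$, and since we also want the bound in terms of $p_t d n(d)$, note the claimed probability only becomes \emph{stronger} as $p_t d$ grows beyond $1$ precisely because $\mu_d$ stays $\Theta(n(d))$ while the target exponent $\eps p_t d n(d)/16$ — wait, that grows. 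Here I must be careful: the lemma asserts probability $\geq 1 - 2\exp(-\eps p_t d n(d)/16)$, which is a \emph{weaker} guarantee when $p_t d$ is large, so I need $\eps^2 \mu_d/3 \geq \eps p_t d\, n(d)/16$ to \emph{fail} gracefully — actually no, a larger Chernoff exponent only helps. So in the regime $p_t d \geq 1$, $\exp(-\eps^2\mu_d/3) \leq \exp(-\eps^2 n(d)/6)$, and I need this $\leq \exp(-\eps p_t d n(d)/16)$, i.e.\ $\eps n(d)/6 \geq p_t d n(d)/16$, i.e.\ $p_t d \leq 8\eps/3$. That is false for large $d$. Hence the inequality as literally stated must be read with the understanding that $\eps p_t d/16 \le \eps^2/3 \cdot (1-(1-p_t)^d)$, which holds once $p_t$ is ``sufficiently small'' \emph{for the relevant range of $d$}; the phrase ``sufficiently small $p_t$'' in the statement is doing exactly this work, and I would make that quantitative by invoking $1-(1-p_t)^d \ge p_t d (1 - p_t d /2)$ and $p_t d \le$ (max degree)$\cdot p_t \ll \eps$.

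\textbf{Main obstacle.} The delicate point is thus not the Chernoff bound itself but pinning down in which regime of $d$ and under what precise smallness condition on $p_t$ the transition from the multiplicative-Chernoff exponent $\eps^2(1-(1-p_t)^d)n(d)/3$ to the stated $\eps p_t d\, n(d)/16$ is valid; I would state and use the elementary inequality $1 - (1-p_t)^d \geq p_t d/2$ for $p_t d \leq 1$ together with the assumption that $p_t$ is small enough that $p_t d_{\max} \le \eps$ (so that the ``loss'' is controlled across all degrees of interest), which makes the comparison go through with room to spare. I do not anticipate needing anything about $\tg_{p_t,d}$ or the $\ell(\cdot)$ correction here — this lemma is purely about membership in $S_t$, so \Lem{samp-tail} supplies everything. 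The union-bound / random-$d_{thr}$ subtlety is handled exactly as in \Lem{head}: $|S_t(d)|$ is well-defined for every $d$ regardless of the value of $d_{thr}$, so the tail bound applies pointwise and one takes a union bound over $d$ afterward when this lemma is used downstream.
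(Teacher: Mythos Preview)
Your expectation argument is fine, and you correctly identify that $|S_t(d)|$ is a sum of independent Bernoulli$\big(1-(1-p_t)^d\big)$ variables. The gap is in the concentration step.

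The simplified multiplicative Chernoff bound $\Pr[|X-\mu|>\eps\mu]\le 2\exp(-\eps^2\mu/3)$ is provably too weak here. With $\mu_d=(1-(1-p_t)^d)n(d)\le n(d)$, the best exponent you can extract is $\eps^2 n(d)/3$, whereas the lemma claims $\eps p_t d\, n(d)/16$. You yourself observe that this requires $p_t d \lesssim \eps$. But the hypothesis of the lemma is $d\ge d_{thr}\ge \log(1/\eps)/p_t$, i.e.\ $p_t d\ge \log(1/\eps)$, which is $\gg 1$ (and certainly $\gg \eps$) for any reasonable $\eps$. So your proposed fix ``assume $p_t$ small enough that $p_t d_{\max}\le\eps$'' is directly incompatible with the regime the lemma is about; the two conditions cannot hold simultaneously.

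What the paper does instead is use the sharp Chernoff--Hoeffding bound $\Pr[X\le(1-\eps)\alpha n]\le \exp(-n\,D(\alpha(1-\eps)\,\|\,\alpha))$ with the KL divergence, and then exploit the fact that in this regime $1-\alpha=(1-p_t)^d$ is \emph{tiny} (of order at most $\eps$, precisely because $p_t d\ge\log(1/\eps)$). When $\alpha$ is close to $1$, the KL divergence $D(\alpha(1-\eps)\|\alpha)$ is dominated by the term $(1-\alpha(1-\eps))\ln\frac{1-\alpha(1-\eps)}{1-\alpha}$, and $\ln\frac{1}{1-\alpha}=-d\ln(1-p_t)\approx p_t d$. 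That is where the extra factor of $p_t d$ in the exponent comes from; it reflects the fact that the variance $n(d)\alpha(1-\alpha)$ is exponentially small in $p_t d$, information that the $\eps^2\mu/3$ form throws away. Replace the simplified bound with the KL form and carry out this lower bound on $D(\alpha(1-\eps)\|\alpha)$; the ``sufficiently small $p_t$'' clause is only used to clean up $(1-p_t)^{1/p_t}\approx e^{-1}$-type approximations, not to force $p_t d$ small.
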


\begin{proof} Every degree $d$ vertex is added to $S$ with probability $1-(1-p_t)^d$ (for convenience,
denote this by $\alpha$).
Linearity of expectation proves that $\EX[|S_t(d)|] = \alpha n(d)$.
Note that $|S_t(d)|$ is the sum of Bernoulli random variables, each with expectation $\alpha$.
By the original Chernoff-Hoeffding bound~\cite{Ch52}, $\Pr[|S_t(d)| \leq (1-\eps)\alpha n(d)] \leq \exp(-D(\alpha(1-\eps) \| \alpha) n(d))$,
where $D(\cdot,\cdot)$ denotes the KL-divergence. With some manipulations,
\begin{align*}
D(\alpha(1-\eps) \| \alpha) 
& = \alpha(1-\eps) \ln\frac{\alpha(1-\eps)}{\alpha} + (1-\alpha(1-\eps)) \ln\frac{1-\alpha(1-\eps)}{1-\alpha} \\
& \geq \alpha\ln(1-\eps) + \alpha \eps \ln(1 + \alpha\eps/(1-\alpha))
\end{align*}
Now we use $d \geq d_{thr} \geq \log(1/\eps)/p_t$. A calculation yields
$[1-(1-p_t)^d]\eps/(1-p_t)^d \geq 1/2$ for sufficiently small $p_t$.
Hence, the expression above is bounded below by:
\begin{align*}
-2\eps + \alpha \eps \ln(\alpha\eps/(1-\alpha))/4
& \geq -2\eps + \alpha\eps \ln(\alpha\eps)/4 - \alpha\eps\ln(1-p_t)^d/4 \\
& \geq -4\eps + \eps p_td/8 \geq \eps p_td/16
\end{align*}
An analogous bound holds for the upper tail, and a union bound completes the proof.
\end{proof}

Hence, we would like to estimate $|S_t(d)|$ and divide by $1-(1-p_t)^d$ to get estimates for
$n(d)$ (where $d$ is large). Our estimate for $|S_t(d)|$ is $\tC(d)$, and this
scaling is precisely what is done in \est.

\begin{definition} \label{def:geocdf}
\begin{asparaitem}
    \item $\cdf_{p_t,s}$: The cdf of $\tg_{p_t,s}$, formally $\cdf_{p_t,s}(k) = \Pr_{X \sim \tg_{p_t,s}}[X \leq k] = [1 - (1-p_t)^{k+1}]/[1-(1-p_t)^s]$.
    \item $\ell(d) = \floor{\EX_{X \sim \tg_{p_t,d}}[X]}$.
    \item $\rd(d) = d - \ell(d)$.
\end{asparaitem}
\end{definition}

Indeed, we will show that the ``ccdh" of
$|S_t(d)|$ is somewhat approximated by that of $\tC(d)$.

\begin{theorem} \label{thm:tc} $\EX[\sum_{r \geq d} \tC(d)] = \sum_{r \geq \rd(d)} \cdf_{p_t,r}(r-\rd(d)) \EX[|S_t(r)|]$.
\end{theorem}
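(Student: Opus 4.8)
The plan is to evaluate $\EX\big[\sum_{r\ge d}\tC_t(r)\big]$ (reading the left side of \Thm{tc} as the partial sum of the shifted histogram, $\tC_t(r)=\C_t(r-\ell(r))=\C_t(\rd(r))$) by unfolding this definition, expanding $\C_t(\cdot)$ as a sum of indicators over vertices, and applying \Lem{samp-tail} to each vertex one at a time. The conceptual heart of the argument is the claim that the random quantity $\sum_{r\ge d}\tC_t(r)$ is simply the number of vertices $v\in S_t$ whose observed count $\ct_t(v)$ is at least $\rd(d)$: a vertex with observed count $c$ is counted by $\tC_t(r)$ exactly for those $r$ with $\rd(r)=c$, and since $\rd$ is a monotone relabeling of the degrees that sweeps through every integer from $\rd(d)$ upward, there is a (unique) such $r\ge d$ precisely when $c\ge\rd(d)$. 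Granting this, the right side of \Thm{tc} falls out of a short computation with \Lem{samp-tail}.

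In order, the steps are: (i) swap the order of summation to get
\[
\sum_{r\ge d}\tC_t(r)=\sum_{r\ge d}\C_t(\rd(r))=\sum_{v}\bone[v\in S_t]\cdot\big|\{r\ge d:\rd(r)=\ct_t(v)\}\big| ;
\]
(ii) using that $\rd$ is nondecreasing with consecutive increments in $\{0,1\}$ and $\rd(r)\to\infty$, reduce $\big|\{r\ge d:\rd(r)=c\}\big|$ to $\bone[c\ge\rd(d)]$, so that the sum becomes $\sum_v\bone[v\in S_t]\,\bone[\ct_t(v)\ge\rd(d)]=\sum_{k\ge\rd(d)}\C_t(k)$; (iii) take expectations, turning this into $\sum_v\Pr[v\in S_t\text{ and }\ct_t(v)\ge\rd(d)]$; (iv) for $v$ of degree $r$, apply \Lem{samp-tail}: $\Pr[v\in S_t]=1-(1-p_t)^r$ and, conditioned on $v\in S_t$, $\ct_t(v)=r-X$ with $X\sim\tg_{p_t,r}$, so the summand equals $(1-(1-p_t)^r)\Pr[X\le r-\rd(d)]=(1-(1-p_t)^r)\,\cdf_{p_t,r}(r-\rd(d))$ by \Def{geocdf}; (v) group vertices by degree and use $\sum_{v:d_v=r}(1-(1-p_t)^r)=n(r)(1-(1-p_t)^r)=\EX[|S_t(r)|]$ (\Lem{highdeg}); (vi) observe that for $r<\rd(d)$ the CDF argument $r-\rd(d)$ is negative, so $\cdf_{p_t,r}(r-\rd(d))=0$ and those terms vanish, leaving exactly $\sum_{r\ge\rd(d)}\cdf_{p_t,r}(r-\rd(d))\,\EX[|S_t(r)|]$.

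I expect step (ii) — the bookkeeping for the shift $\ell(\cdot)$ — to be the only part requiring care; the rest is routine linearity of expectation. One needs that $\rd(r)=r-\ell(r)$ is nondecreasing and that consecutive values differ by at most $1$, so that as $r$ runs over $\{d,d+1,\dots\}$ the values $\rd(r)$ sweep through every integer $\ge\rd(d)$ without a gap. Writing $\mu(d)=\EX_{X\sim\tg_{p_t,d}}[X]=\frac{1-p_t}{p_t}-\frac{d(1-p_t)^d}{1-(1-p_t)^d}$ (the closed form behind \Lem{geo}), this amounts to checking that $\mu$ is increasing with increments strictly less than $1$, so that $\ell(d)=\lfloor\mu(d)\rfloor$ never jumps by more than one. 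The one genuinely delicate case is a degree where $\mu$ crosses an integer: there $\rd$ repeats a value, and $\sum_{r\ge d}\tC_t(r)$ would count that bucket of $\C_t$ with multiplicity greater than one. I would handle this either by establishing that $\rd$ is injective on the range of degrees that actually matters here (i.e. $d\ge d_{thr}$), or by stating \Thm{tc} as an identity that is exact when $\rd$ is injective on $\{d,d+1,\dots\}$ and otherwise holds up to this small over-count.
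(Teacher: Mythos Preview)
Your approach is essentially the paper's. Both proofs identify $\sum_{r\ge d}\tC_t(r)$ with the number of $v\in S_t$ satisfying $\ct_t(v)\ge\rd(d)$, then apply \Lem{samp-tail} vertex by vertex, group by degree, and invoke $\EX[|S_t(r)|]=(1-(1-p_t)^r)n(r)$.

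The one difference is that you are more careful than the paper about step~(ii). The paper simply asserts that any $v\in S_t$ with $\ct_t(v)\ge\rd(d)$ ``will be counted as part of $\tC(r)$, for some $r\ge d$,'' and then proceeds as if this correspondence were a bijection; it never addresses whether such a $v$ could be counted more than once. You correctly observe that this requires $\rd$ to be injective on $\{d,d+1,\dots\}$, which in turn needs $\ell(\cdot)$ to increment by at most one and to eventually stabilize. Since $\mu(r)\to(1-p_t)/p_t$ as $r\to\infty$, the floor $\ell(r)$ does stabilize and the repeats of $\rd$ are confined to a bounded initial segment; for $d$ beyond that segment (in particular for $d\ge d_{thr}$ when $d_{thr}\ge\log(1/\eps)/p_t$, since $\mu$ is already within $O(\eps)$ of its limit there) the injectivity holds and the identity is exact. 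Your caveat about stating the theorem as exact only when $\rd$ is injective on $\{d,d+1,\dots\}$ is the right way to close this small gap that the paper leaves open.
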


\begin{proof} Note that $\ell(d)$ is monotonically increasing in $d$.
Any $v \in S$ such that $\ct_t(v) \geq \rd(d)$ will be counted as part of $\tC(r)$, for some
$r \geq d$. The quantity $\loss(v) = d_v - \ct_t(v)$, conditioned in $v \in S$, is distributed as $\tg_{p_t,d_v}$.
The probability of the loss being most $d_v - \rd(d)$ is exactly $\cdf_{p_t,d_v}(d_v - \rd(d))$.
\begin{eqnarray*}
    \EX[\sum_{r \geq d} \tC(d)] & = & \sum_v \Pr[v \in S] \Pr[\loss(v) \leq d_v - \rd(d) | v \in S] \\
    & = & \sum_v [1-(1-p_t)^{d_v}] \cdf_{p_t,d_v}(d_v - \rd(d)) \\
    & = &\sum_{r \geq \rd(d)} \cdf_{p_t,r}(r-\rd(d)) [1-(1-p_t)^d] n(r)
\end{eqnarray*}
By \Lem{highdeg}, $[1-(1-p_t)^r]n(r) = \EX[|S_t(r)|]$
\end{proof}

\subsection{Making sense of \Thm{tc}} \label{sec:tc}

Fix $p_t$ and $d$.
Consider $\cdf_{p_t,r}(r-\rd(d))$ as a function of $r$, and suppose it had value $0$ for $r < d$, and value $1$ for $r \geq d$.
Think of this as the ideal value for this function.
Then, by \Thm{tc}, $\EX[\sum_{r \geq d} \tC(d)] = \sum_{r \geq d}  \EX[|S_t(r)|]$, which
would be exactly what we want. We prove that the ``coefficients" $\cdf_{p_t,r}(r-\rd(d))$
behave like a step function with a transition roughly at $d$. So $\EX[\sum_{r \geq d} \tC(d)]$
is a sort of smoothed version of $\sum_{r \geq d} \EX[|S_t(r)|]$.

We begin with some approximations for $\cdf_{p_t,r}$. It is useful
to think of the limit as $p \rightarrow 0$
and reparametrize as $d = \mult/p_t$.
By \Lem{geo},
\begin{eqnarray*}
    \EX_{X \sim \tg_{p_t,d}}[X] & = & \frac{1 - p_t - (1-p_t)^{1+\mult/p_t} - \mult(1-p_t)^{\mult/p_t}}{p_t(1-(1-p_t)^{\mult/p_t})}\\
    & \approx & \frac{1 - p_t - e^{-\mult} - \mult e^{-\mult}}{p_t(1-e^{-\mult})} \\
    & \approx & \frac{1}{p_t} - \frac{\mult e^{-\mult}}{p_t(1-e^{-\mult}} \approx \frac{1}{p_t}(1-\mult e^{-\mult})
\end{eqnarray*}

Thus, $\rd(d) = \mult/p_t - 1/p_t + \mult e^{-\mult}/p_t$. Now consider some $r = x/p_t \geq \rd(p)$.
\begin{eqnarray}
    \cdf_{p_t,r}(r-\rd(d)) & = & \frac{1-(1-p_t)^{1+(1/p)\cdot(x- \mult + 1- \mult e^{-\mult})}}{1-(1-p_t)^{x/p}} \nonumber\\
    & \approx & \frac{1-\exp(-(x-\mult+1-\mult e^{-\mult}))}{1-\exp(-x)} \label{eqn:cdf}
\end{eqnarray}

Clearly, as $x$ becomes large, this expression goes to $1$. The minimum possible
value of $x$ is $\mult-1+\mult e^{\mult} $ (equivalently, $r = \rd(d)$), for which
the expression is $0$. It behaves roughly like a step function, with a transition point (roughly)
at $\mult-1+\mult e^{-\mult}$. As $\mult$ becomes large, the transition point is $\mult-1$, close
to $\mult$.
When $\mult$ is small, the extra $\mult e^{-\mult}$ additive terms
ensures the transition is closer to $\mult$. Of course, as $\mult$ becomes smaller, the function looks less
like a sharp transition function.
This is shown in \Fig{step}. We plot $\cdf_{p_t,r}$ according to \Eqn{cdf} for $k=5, 10, 100$. The red vertical
line is $x=k$ (so $r = k/p_t$), and we draw dashed vertical lines corresponding to value $0.1$ and $0.9$.
The width between the dashed lines is a rough measure of the error in approximation.
Observe how it is fairly close to a step function for $k=10$, and is a coarser approximation for $k=5$.

\begin{figure*}[t]
    \centering
    \begin{subfigure}[b]{0.3\textwidth}
        \centering \includegraphics[width=\textwidth]{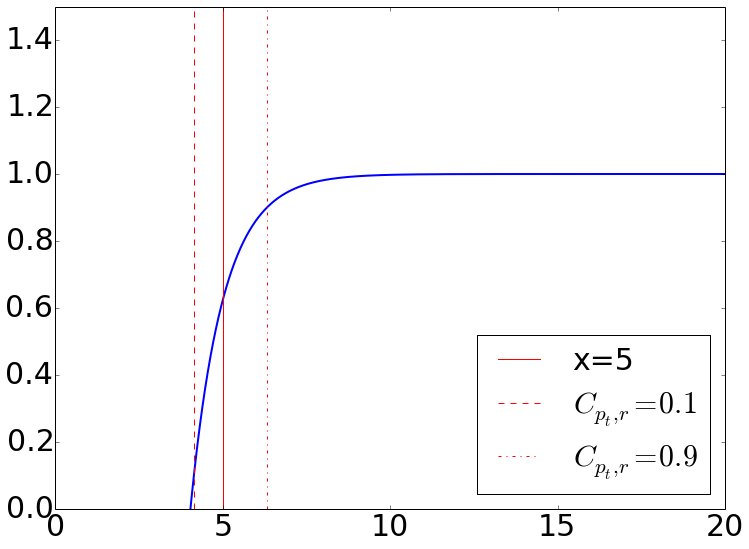}
        \caption{$k=5$.}
        \label{fig:logisticC-k5}
    \end{subfigure}
    \hfill
    \begin{subfigure}[b]{0.3\textwidth}
        \centering \includegraphics[width=\textwidth]{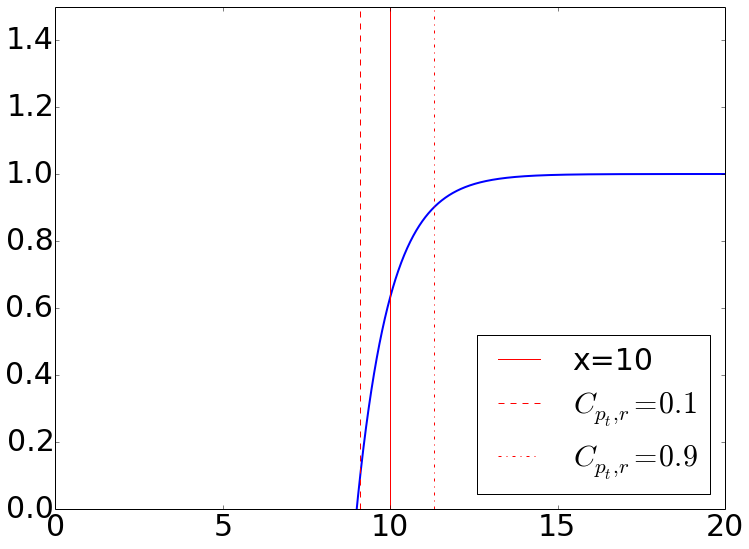}
        \caption{$k=10$.}
        \label{fig:logisticC-k10}
    \end{subfigure}
    \hfill
    \begin{subfigure}[b]{0.3\textwidth}
        \centering \includegraphics[width=\textwidth]{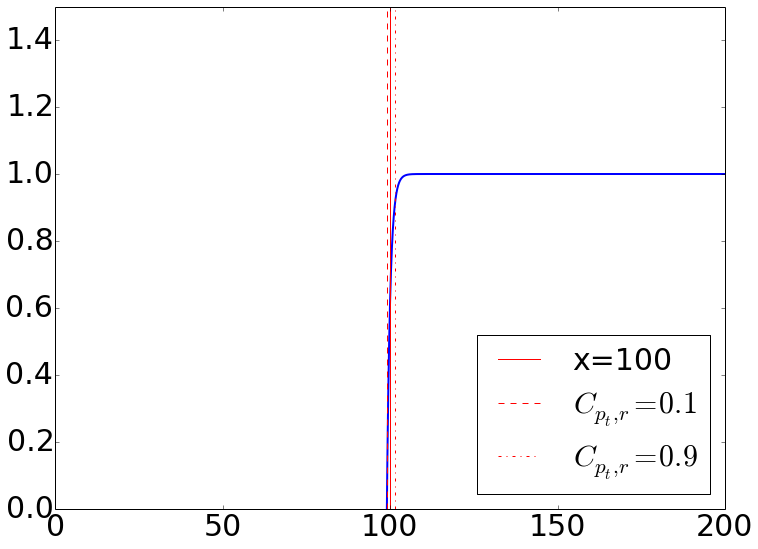}
        \caption{$k=100$.}
        \label{fig:logisticC-k100}
    \end{subfigure}
    \caption{Plots of $\cdf_{p_t,r}$ according to~\Eqn{cdf} for different values of $k$. Note that $r$ is set to $x/k$. In each
    plot, the thin vertical line is $x = k$, and the dashed and dotted lines
correspond to values of $0.01$ and $0.09$, respectively.}
    \label{fig:step}
\end{figure*}

%
%
% In
% Figures~\ref{fig:logisticC-k5},\ref{fig:logisticC-k1},\ref{fig:logisticC-k10},\ref{fig:logisticC-k100},
% we plot the function for different fixed values of $\mult$.  We also plot the
% vertical lines $x=\mult$, and the points the expression equals $0.1$ and $0.9$.

% \begin{figure}[!h]%[!t]
% \centering \includegraphics[width=3.5in]{logisticC-k5}
% \caption{Logistic behavior of $C_{p_t, r}$ as described in (\ref{eq:Cr}).  We
% fix $k=5$ and plot the function over $x$.  Additionally we draw attention to
% approximate transition point by plotting vertical lines at $x=k$, and the points
% $x$ at which $C_{p_t, r} = 0.1$ and $0.9$.}
% \label{fig:logisticC-k5}
% \end{figure}
%
% \begin{figure}[!h]%[!t]
% \centering \includegraphics[width=3.5in]{logisticC-k1}
% \caption{Similar to Figure~\ref{fig:logisticC-k5} with $k=1$.}
% \label{fig:logisticC-k1}
% \end{figure}
%
% \begin{figure}[!h]%[!t]
% \centering \includegraphics[width=3.5in]{logisticC-k10}
% \caption{Similar to Figure~\ref{fig:logisticC-k5} with $k=10$.}
% \label{fig:logisticC-k10}
% \end{figure}
%
% \begin{figure}[!h]%[!t]
% \centering \includegraphics[width=3.5in]{logisticC-k100}
% \caption{Similar to Figure~\ref{fig:logisticC-k5} with $k=100$.}
% \label{fig:logisticC-k100}
% \end{figure}

Hence, $\EX[\sum_{r \geq d}\tC(r)]$ is much further from $\EX[\sum_{r \geq d}|S_t(r))|]$,
and \est{} provides worse results. But we set $d_{thr} > \log(1/\eps)/p_t$. So for degrees
close to $1/p_t$, we do not use the tail estimator.

\section{The Relative Hausdorff distance} \label{sec:rh}

One of the main challenges in experimentally validating the behavior of \est{} is
in defining a distance between ccdhs. As we hinted earlier, existing statistical distances
do not capture ``similarity" of ccdhs.
Motivated by concerns (detailed below), we define a new notion of distance between ccdhs (technically,
between cumulative complementary histograms).
This is inspired by the geometric notion of Hausdorff distance between subsets of a metric space.
We say a ccdh is non-trivial
if it contains some non-zero point.

\begin{definition} \label{def:hauss} Let $F$ and $G$ be non-trivial ccdhs. Fix non-negative numbers $\eps, \delta$.
The distributions $F$ and $G$ are \emph{$(\eps,\delta)$-close by Relative Hausdorff (RH) distance} if:
$$ \forall d, \exists d' \in [(1-\eps)d, (1+\eps)d], \ \textrm{such that} \ |F(d) - G(d')| \leq \delta F(d).$$
(An analogous condition holds with $F$ and $G$ switched.)

The \emph{RH}-distance between $F$ and $G$ (denoted $RH(F,G)$) is  $\inf\{\eps | \textrm{$F$ and $G$ are $(\eps,\eps)$-close}\}$.
\end{definition}

Note that the $RH$-distance can be greater than $1$.
For $\eps' \geq \eps$ and $\delta' > \delta$, if $F$ and $G$ are $(\eps,\delta)$-close,
they are also $(\eps',\delta')$-close. Since $F$ and $G$ are non-trivial, we can set $\eps$ to be large enough so
that for some $\delta$, $F$ and $G$ are $(\eps,\delta)$-close. Thus, the RH distance always exists.
If $RH(F,G) = 0$, then $F$ and $G$ are identical.

Observe that RH distance tolerates error both in degree and frequency, which is very important for
comparing degree distributions.
The RH distance exactly captures the notion of being close in log-scale, but is a much more stringent condition.
It forces \emph{all} points in $F$ to be close to some point in $G$ (and vice versa).
All ``outlier" and tail behavior in $F$ must be approximated in $G$. For RH-close ccdhs, the maximum degrees
must be close, and furthermore, there must be approximate agreement for frequencies at all scales.

To understand numerics, we think it is useful think of an RH-distance $< 0.05$ to be quite small.
Suppose $RH(N,\hN) < 0.05$ for a true ccdh $N$ and our algorithm output $\hN$. This means that
for every reported point $\hN(d)$ is within 5\% of some $N(d')$, where $d'$ is within 5\% of $d$
(and vice versa). Any RH distance greater than $1$ is very large, since we only get closeness
when $\eps \geq 1$.

\subsection{Problems with KS-statistic}

Fix two ccdhs $F$ and $G$. A standard comparison metric is the Kolmogorov-Smirnov (KS) statistic, $KS(F,G) = \max_x |\tF(x) - \tG(x)|$,
where $\tF, \tG$ are normalized as distributions. (So $\tF(d)$ is the fraction of vertices
with degree at least $d$.)

We discuss specific problems with the KS statistic and show how RH avoids these pitfalls.
(The exact same issues also holds for normed distances, so we do not explicitly calculate these.)

\textbf{Comparing cliques:} Let $F$ be the ccdh of an $n$-clique and $G$ be the ccdh of an $(n-1)$-clique.
So $\forall \ 0 \leq i \leq n-1, F(i) = n$, $\forall \ 0 \leq i \leq n-2$, $G(i) = n-1$, and all other values are $0$.
The KS-statistic is actually $1$ (which is extremely large), since $\tG(n-2) = 0$ but $\tF(n-2) = 1$.
This is inconsistent with our intuitive notion that these degree distributions are similar.
The RH distance is $O(1/n)$, since it allows for error in degree and frequency.

\textbf{Star vs matching:} Let $F$ be the ccdh of a star with $n$ vertices, and $G$
be the ccdh of a matching (disjoint edges) with $n$ vertices. (Assume $n$ is even.)
So $F(1) = n$, $\forall 2 \leq i \leq n-1$, $F(i) = 1$, and other values are $0$.
We also have $G(1) = n$, and all other values are $0$. The values of $F$ that are $1$
are insignificant compared to the dominant $F(1) = n$. A calculation shows $KS(F,G) = O(1/n)$,
though we should probably consider them different.
On the other hand, $RH(F,G) = 1 - \Theta(1/n)$. The ``outlier" $F(n-1)=1$ forces the $\eps$ to be $1-\Theta(1/n)$,
since $G(i) = 0$ for $i > 1$.

\textbf{Ignoring the tail:} Let $F$ be the ccdh of the as-Skitter graph, as plotted in \Fig{skitter-ccdh}.
Let $G$ be the same ccdh up to degree $100$ and zero afterwards. In other words, $G$ is identical to $F$
up to the ``tail" starting at degree $100$. The fraction of vertices with degree $> 100$ is at most $0.01$.
A calculation shows that $KS(F,G) < 0.01$. So ignoring a large portion of the tail still yields
small KS-distance. The RH-distance is $>0.99$, since $\eps$ needs to be large to handle
the tail of $F$.

\section{Experimental Results} \label{sec:expresults}
We implemented the algorithm in Python and performed experiments on a Samsung
NP-QX411L laptop with an Intel Core i5-2450M 2.5GHz four core processor and
5.7GB of memory.  To simulate a stream, we convert a graph to a list of edges
stored in a text file, and read the file one line at a time.  In the case that
the graph is directed, we treat it as undirected by considering each edge as an
unordered pair of vertices.  Note that this may imply multi- or parallel edges,
though we calculate degrees for the actual ccdh respecting this notion.

We test the algorithm on a number of graphs from the SNAP~\cite{snap} and
KONECT~\cite{konect} collections, the statistics of which are summarized in
Table~\ref{table:manygraphs}.  We use the as-Skitter graph on 1.7M nodes and 11M
edges as a case study.

We use the phrase \emph{storage of \degdist} to indicate the total storage $|S_h| + |S_t|$.
As explained in \Thm{time}, this depends on $p_h$ and $p_t$.

\subsection{Convergence of \degdist} \label{sec:conv}
We demonstrate how increasing the storage of \degdist{} leads to convergence of
the ccdh. We fix the as-Skitter graph. We increase the storage by letting $p_h$
range from $0.01$ to $0.1$ in increments of $0.01$, and $p_t$ range from $0.01$
to $0.16$ in increments of $0.01$.  For each setting of $p_h$ and $p_t$, we
perform five independent runs of \degdist. We also run ten independent runs
fixing $p_h=0.005, p_t=0.01$.  For each such run, we compute the RH distance
between the output of \degdist{} with the true ccdh. The results are shown
in~\Fig{skitter_scatter}. Observe how the RH distance goes to zero as the
storage increases.  In particular, \degdist{} outputs a ccdh with RH distance as
small as $0.03$ using 230K space.
%
% An important general observation is that the RH distance improves with the
% size of the sample set.  Figure~\ref{fig:skitter_scatter} plots the RH
% distance of estimates as the combined sizes of the sample sets $S_h, S_t$
% increases.  In each experiment, we fix $p_h=0.01$ and vary $p_t$.  We measure
% the RH distance of the estimate by determining the minimum $\eps$ such that
% the estimate is $(\eps, \eps)$-close the the actual ccdh, beginning with
% $\eps=0.001$.  Each marker corresponds to one experiment, and is plotted
% according to space used for the estimator against RH distance.  We see that as
% the space increases, the RH distances tend to converge.
\begin{figure}
\centering \includegraphics[width=3.0in]{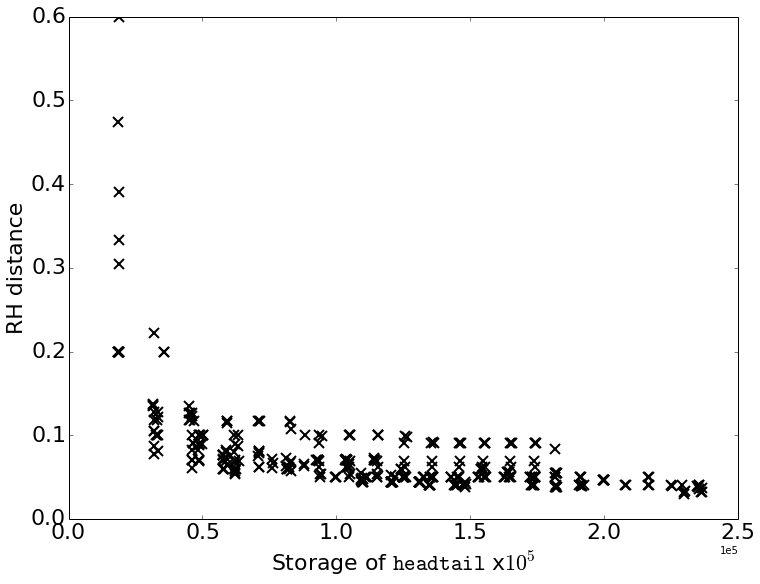}
\caption{RH distance as the storage of \degdist{} increase.}
\label{fig:skitter_scatter}
\end{figure}

We do a more nuanced study of how $p_h$ and $p_t$ affect convergence.
In this experiment, we fix a value $p_h$ and vary $p_t$ in
increments of $0.02$.  We repeat this process for $p_h = 0.01, 0.025, 0.05,
0.075, 0.1$.  The RH distances of the runs are plotted in
Figure~\ref{fig:skitter_summary_varytail}.  Each line in the plot corresponds to
a fixed $p_h$ value, and the RH distances are plotted against $p_t$.  We point
out that an RH distance of about $0.04$ is achieved with head and tail
probabilities as small as $0.025, 0.03$, respectively, resulting in a total
sample size of 82K or $0.7\%$ of the edge stream.
For each fixed $p_h$, increasing $p_t$ initially decreases the RH distance, but it eventually
converges to a non-zero value. This is because all the error is coming from the head estimate.
As we increase $p_h$, the convergence value goes down to zero, as expected.
%
%  We also note that varying the
% tail probability has a more fine-tuned effect on the RH distance than varying
% the head probability.  That is, in general (with the exception of $p_h = 0.01$),
% each of the lines in the plot tend around the same RH distance as $p_t$
% increases.
\begin{figure}
\centering \includegraphics[width=3.0in]{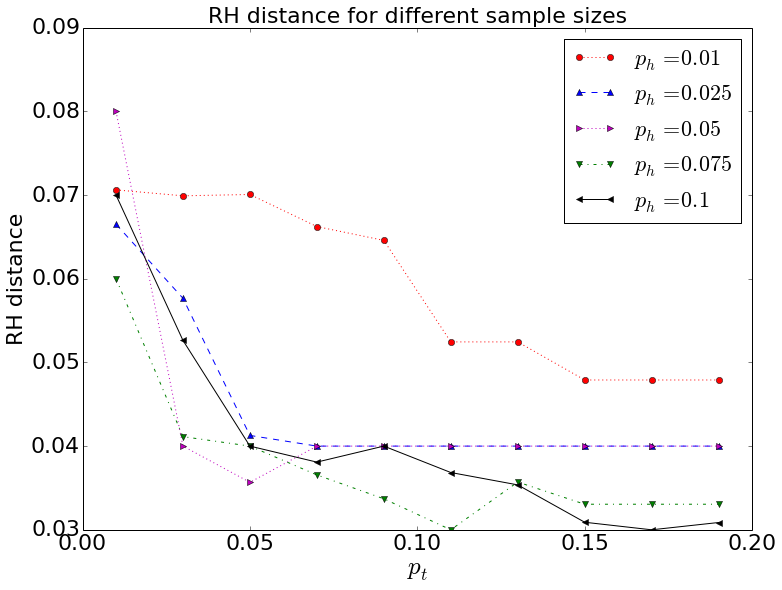}
\captionsetup{width=0.9\textwidth}
\caption{RH distance of the estimate output by our algorithm as $p_h$ and $p_t$
vary.  Each line in the plot correponds to a fixed value for $p_h$, and plots
the RH distance as $p_t$ varies.  A near optimal RH value is achived with $p_h =
0.025$ and $p_t = 0.03$, which yielded sample sets with $|S_h|+|S_t| \approx
0.007m$.}
\label{fig:skitter_summary_varytail}
\end{figure}

\subsection{Results for various graphs}
Here we demonstrate the quality of the estimates output by \degdist~on a variety
of graphs.  Each of the graphs are from the SNAP graph collection~\cite{snap}
with the exception of the youtube and youtube-friendship graphs which are from
the KONECT~\cite{konect} collection.  The node and edge set sizes of each graph
are given in the second and third columns of Table~\ref{table:manygraphs},
respectively.  For each graph we include the storage of the
algorithm and the RH distance of the estimate for two example runs.
The storage is less than $< 1\%$ in almost at runs, and certainly less than $< 2\%$.
Observe how the RH distance is usually less than $0.1$. In our worst examples,
(soc-Pokec and com-Orkut), the RH distance is less than $0.15$. We stress that RH distance
is a rather stringent condition, since it requires closeness of the estimate
at \emph{all} degrees.

In \Fig{ccdh} of the introduction, we have plotted the actually ccdh and the output
of \degdist{} for three of these graphs.
Observe the near identical match in all examples.

\begin{table}
\small
\centering
\begin{tabular}{l|c|c|r|l}
\hline
Graph & $n$ & $m$ & Space & RH distance\\
\hline\hline
youtube & 1.1M & 3M & 21K & $0.1$\\
%& & & 56K & $0.093$\\
& & & 90K & $0.076$\\\hline
%& & & 113K & $0.071$\\\hline
wiki-Talk & 2.3M & 5M & 38K & $0.1$\\
& & & 74K & $0.055$\\\hline
%& & & 122K & $0.05$\\
%& & & 170K & $0.05$\\\hline
youtube-friendship & 3M & 9M & 80K & $0.067$\\
& & & 196K & $0.05$\\\hline
as-Skitter & 1.7M & 11M & 31K & $0.1$\\
& & & 69K & $0.073$\\\hline
%& & & 123K & $0.05$\\\hline
%& & & 169K & $0.05$\\\hline
soc-Pokec & 1.6M & 30M% & 31K & $0.5$\\
& 75K & $0.29$\\
% & & & 147K & $0.34$\\
& & & 212K & $0.14$\\\hline
com-LiveJournal & 4M & 34M% & 80K & $0.2$\\
%& & & 184K & $0.1$\\
& 335K & $0.08$\\
& & & 467K & $0.058$\\\hline
com-Orkut & 3M & 117M% & 61K & $0.67$\\
%& & & 144K & $0.19$\\
& 273K & $0.14$\\
& & & 387K & $0.13$\\\hline
\end{tabular}
\caption{Performance of \degdist~for a number of graphs.}
\label{table:manygraphs}
\end{table}

\subsection{Errors at different scales}\label{sec:differentscales}
Here we investigate how well \degdist{} performs at different scales.
Specifically, we measure the error of a ccdh estimate at each degree.  Let $N$
be the ccdh of the as-Skitter graph, and $\hN$ be the \degdist{} output.
The RH distance is maximized over all degrees, so we do a more detailed
analysis of the estimate errors.
We fix a value for $\eps$ and for each
degree $d$ compute the minimum value $\delta$ such that $\exists d' \in
[(1-\eps)d, (1+\eps)d]$ where $|N(d) - \hN(d')| \leq \delta N(d)$ and vice versa.
In words, we are ``opening up'' the definition of RH-distance and looking at the
profile for every degree.

We performed a run of \degdist{} with $p_h = 0.01$ and $p_t=0.0007$ for the as-Skitter
graph. This used a storage of 31K ($< 0.5\%$ of stream). We then plot in \Fig{allscales}
the corresponding $\delta$ values with $\eps$ set to $0.1$ .
The red `\textsf{x}' markers denote the $\delta$-values for \degdist{} (the
other markers are explained later). Observe how the $\delta$ values
are quite small throughout, and peak at degree $100$ to roughly $0.08$.
In this case, \degdist{} achieves an RH-distance of about $0.1$ with 31K space.

\begin{figure}
\centering \includegraphics[width=3.0in]{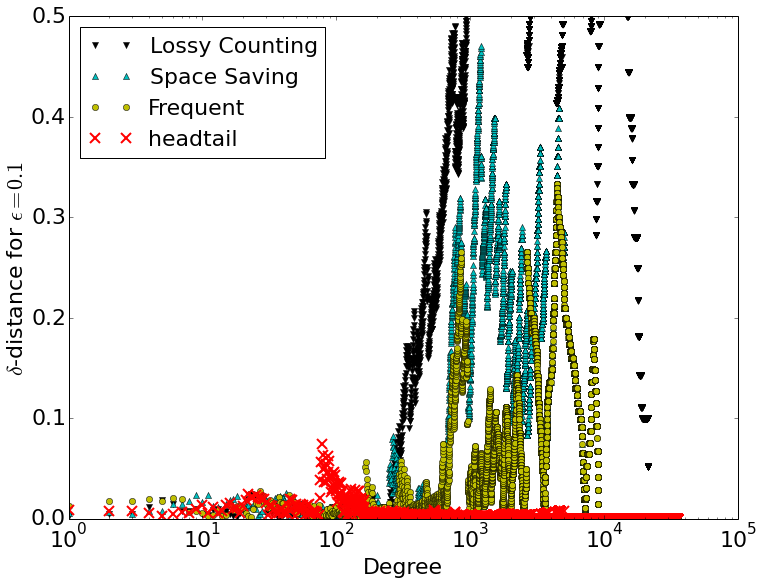}
\captionsetup{width=0.9\textwidth}
\caption{RH distances at different degrees.  We plot the $\delta$-distance for
$\eps=0.1$.  The red `\textsf{x}' markers correspond to an estimate output
by~\degdist{} using a storage of 31K.  The estimate is $(0.1, 0.08)$-far from
the true ccdh.  The rest of the plots correspond to combinations of the head
estimator using 17K space and the heavy hitter algorithms using 34K space for a
total of 51K space.  The \emph{lossy counting} estimate is $(0.1, 1.5)$-far from
the true ccdh, the \emph{space saving} estimate $(0.1, 0.4)$-far and the
\emph{frequent} estimate is $(0.1, 0.33)$-far from the true ccdh.}
\label{fig:allscales}
\end{figure}

%\begin{figure*}[ht]
%\centering
%  \begin{minipage}[b]{0.5\linewidth}
%    \centering
%    \includegraphics[width=\linewidth]{skitter_varytail_eps05_space0186}
%    \label{vt:skitter05_0186}
%    \caption{An estimate using 31K space.  Let $\eps=0.05$, the estimate is
%$(0.05, 0.13)$-far from the true ccdh.}
%    %\vspace{4ex}
%  \end{minipage}%%
%  \begin{minipage}[b]{0.5\linewidth}
%    \centering
%    \includegraphics[width=\linewidth]{skitter_varytail_eps05_space1016}
%    \label{vt:skitter05_1016}
%    \caption{An estimate using 172K space.  Let $\eps=0.05$, the estimate is
%$(0.05, 0.08)$-far from the true ccdh.}
%    %\vspace{4ex}
%  \end{minipage}
%  \begin{minipage}[b]{0.5\linewidth}
%    \centering
%    \includegraphics[width=\linewidth]{skitter_varytail_eps1_space0186}
%    \label{vt:skitter1_0186}
%    \caption{An estimate using 31K space.  Let $\eps=0.1$, the estimate is
%$(0.1, 0.08)$-far from the true ccdh.}
%    %\vspace{4ex}
%  \end{minipage}%%
%  \begin{minipage}[b]{0.5\linewidth}
%    \centering
%    \includegraphics[width=\linewidth]{skitter_varytail_eps1_space1016}
%    \label{vt:skitter1_1016}
%    \caption{An estimate using 172K space.  Let $\eps=0.1$, the estimate is
%$(0.1, 0.03)$-far from the true ccdh.}
%    %\vspace{4ex}
%  \end{minipage}
%\caption{The error at different degrees for estimates using 31K space and 172K
%space.  In Figures~\ref{vt:skitter05_1086} and~\ref{vt:skitter05_1016}, we fix
%$\eps = 0.05$ and in Figures~\ref{vt:skitter1_0186} and~\ref{vt:skitter1_1016}
%we fix $\eps-0.1$.}
%\label{fig:varytail}
%\end{figure*}

\subsection{Comparing to other methods}\label{sec:compare}

While there is no existing small-space algorithm that has demonstrable convergence
to the ccdh, there are numerous algorithms to only capture the tail.
These are classic ``heavy hitters" algorithms: the \emph{frequent} algorithm~\cite{demaine2002frequency,
  karp2003simple,BerindeICS10}, the \emph{lossy counting}
algorithm~\cite{manku2002approximate}, and the \emph{space saving}
algorithm~\cite{metwally2005efficient}. We study the performance of these methods.
For convenience, we use ``head estimator" to denote the algorithm that simply
takes uniform samples of vertices and uses their degrees to estimate the full ccdh.
This is basically what \degdist{} employs for $d \leq d_{thr}$.

We fix the as-Skitter graph, and set the storage used by these algorithms to 35K. (Note that with storage 31K, \degdist{} gives an estimate
with RH-distance less than $0.1$.) We show the resulting estimates of these algorithms
in \Fig{compare}. Not surprisingly, none of these algorithms give reasonable estimates
for $N(d)$, where $d \leq 10^3$.

At the face of it, the above algorithms perform reasonably well on the tail. The head estimator
(which is quite simple) seems to work well for the head. Could we just combine these algorithms,
and outperform \degdist? We show that this is not the case. Crucially,
none of these algorithms actually get accurate estimates even
at the moderate to high degrees, despite the apparent closeness in the log-log plot of \Fig{compare}.

We convert the existing algorithms for the full ccdh, by combining with the head estimator.
Pick (say) the algorithm \emph{frequent}.
We first run the head estimator with 20K space. We choose an appropriate $d_{thr}$, where we apply the head
estimator for $d \leq d_{thr}$, and \emph{frequent} for $d > d_{thr}$.
We pick the $d_{thr}$ that minimizes the RH distance to $\{N(d)\}$.
We do the same for each of \emph{frequent}, \emph{space saving}, and \emph{lossy counting}.
Note that we are being extra generous to the competing methods. First, the total
storage used is about 50K. Furthermore, we choose the $d_{thr}$ to minimize RH distance,
while \degdist{} chooses it based on a fixed formula.

The RH distance we achieved was $0.3$ (\emph{frequent}), $0.5$ (\emph{space saving}), and $1.5$ (\emph{lossy counting}).
All of these used storage 50K. In contrast, \degdist{} had RH distance of $0.1$ with 31K storage.
We measure the errors at all scales in~\Fig{allscales}, for all these algorithms. This is exactly
using the explanation in previous section, by setting $\eps = 0.1$, and plotting the $\delta$
values for all the estimates.

We immediately see how the $\delta$-values (errors) for all the competing procedures are much higher than \degdist.
Indeed, for degrees around $10^3$, the errors of the other procedures are extremely high, despite higher storage.
We see that \degdist{} handily beats all the procedures, at pretty much all scales simultaneously.
In \Fig{frequent_head}, we plot the output ccdh for the head estimator combined with \emph{frequent}.
As expected from \Fig{allscales}, we see a fair amount of fluctuation from the true ccdh in the the intermediate
to high degrees. We stress that a small fluctuation in a log-log plot is actually a fairly large error
in the RH measure.

For completeness, we increase the storage of the competing methods to get RH distance of around $0.1$.
For all the other algorithms, we require storage more than 150K to get comparable
error to what \degdist{} gives with 31K storage.

%  We started by using sample
% space comparable to what we use for~\degdist, and then increase the space to see
% how well the estimators perform.  When combining with the head estimator, we
% chose $d_{thr}$ to produce a ccdh estimate as good as possible with respect to
% RH distance.  In every case it resulted in a threshold much higher than that
% used in our algorithm, so that most of the ccdh is captured by the head
% estimator.
%
%
% In this section we compare the results of our algorithm to some proposed and
% existing alternatives for computing the ccdh.  First, we use the head estimator
% alone.  Then we implement three ``heavy hitters'' algorithms
% -  (see Section~\ref{sec:relatedwork}).  We
% plot the estimated ccdhs of each of these methods in Figure~\ref{fig:compare}.
%
\begin{figure*}
\centering
  \begin{minipage}[b]{0.5\linewidth}
    \centering
    \includegraphics[width=0.9\linewidth]{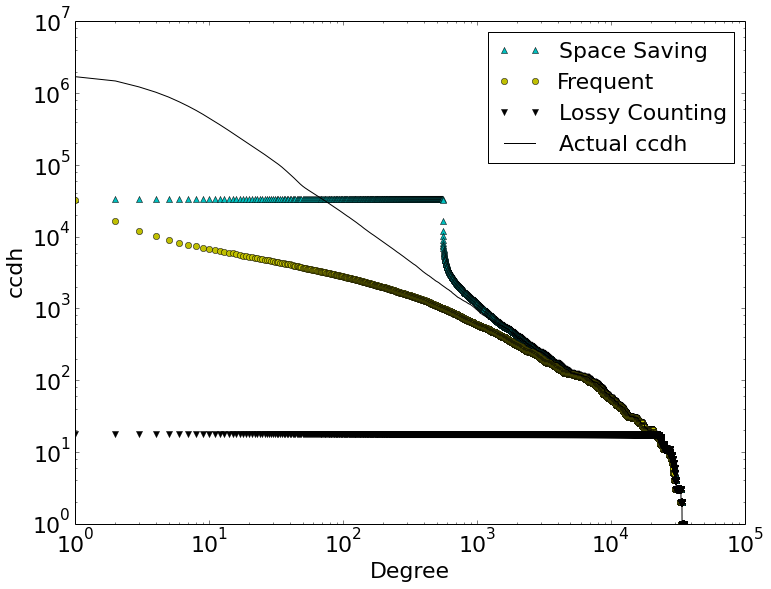}
    \captionsetup{width=0.9\textwidth}
    \caption{ccdh estimates output by the \emph{frequent}, \emph{lossy counting},
and \emph{space saving} algorithms each using a storage of 35K.\\}
    \label{fig:compare}
    %\vspace{4ex}
  \end{minipage}%%
  \begin{minipage}[b]{0.5\linewidth}
    \centering
    \includegraphics[width=0.9\linewidth]{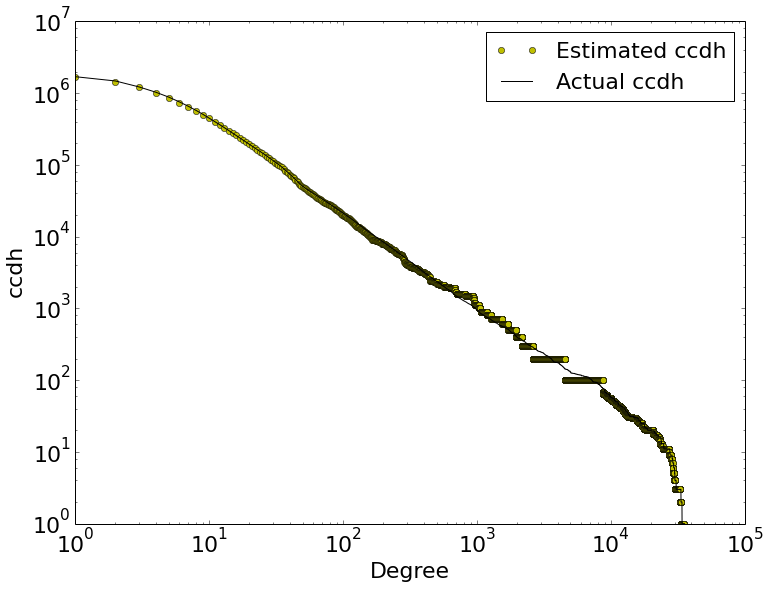}
    \captionsetup{width=0.9\textwidth}
    \caption{ccdh estimate output by the head estimator combined with the
\emph{frequent} algorithm using a storage of 50K.  The RH distance
is $0.33$.}
    \label{fig:frequent_head}
    %\vspace{4ex}
  \end{minipage}
\end{figure*}

\subsection{Results for different stream orderings}
As stated previously, our algorithms do not assume any stream order.  In this
section we test the performance of the algorithm when provided the stream in
different orderings.  We use six different orderings in total.  The first three
are different random orderings.  The second three are each edgelists (that is,
all the edges adjacent to a particular node are read in sequence), but the
orderings of the nodes are different.  In one, we read the nodes of highest
degree first, in another we read the nodes in increasing order of degree, and in
the last we consider a random ordering of the nodes.  In each experiment we let
$p_h = 0.01$ and $p_t = 0.04$.  The standard deviation of the RH distances for
each ordering is $0.009$.  Table~\ref{table:ordering} summarizes the RH distance
of estimated ccdhs with different stream orderings.

\begin{table}
\small
\centering
\begin{tabular}{l|l}
\hline
Ordering & RH distance\\
\hline\hline
Random1 & $0.068$\\\hline
Random2 & $0.06$\\\hline
Random3 & $0.07$\\\hline
Edgelist: Decreasing order of degree & $0.08$\\\hline
Edgelist: Increasing order of degree & $0.083$\\\hline
Edgelist: Random & $0.061$\\\hline
\end{tabular}\captionsetup{width=0.9\textwidth}
\caption{Performance of \degdist~for different stream orderings.  The first
three are different random stream orderings.  The second three are edgelists
permuted by the nodes.  In each trial $p_h = 0.01, p_t = 0.04$.}
\label{table:ordering}
\end{table}

\section*{Acknowledgment}

The authors would like to thank Tammy Kolda, Ali Pinar, and David Mayer for useful discussions.
Much of this work was done in Sandia National Laboratories, Livermore, and funded by the DARPA
GRAPHS program.

% number - used to balance the columns on the last page
% adjust value as needed - may need to be readjusted if
% the document is modified later
%\IEEEtriggeratref{8}
% The "triggered" command can be changed if desired:
%\IEEEtriggercmd{\enlargethispage{-5in}}

% references section

% can use a bibliography generated by BibTeX as a .bbl file
% BibTeX documentation can be easily obtained at:
% http://www.ctan.org/tex-archive/biblio/bibtex/contrib/doc/
% The IEEEtran BibTeX style support page is at:
% http://www.michaelshell.org/tex/ieeetran/bibtex/
%\bibliographystyle{IEEEtran}
% argument is your BibTeX string definitions and bibliography database(s)
%\bibliography{IEEEabrv,../bib/paper}
%
% <OR> manually copy in the resultant .bbl file
% set second argument of \begin to the number of references
% (used to reserve space for the reference number labels box)
% \begin{thebibliography}{1}

% \bibitem{IEEEhowto:kopka}
% H.~Kopka and P.~W. Daly, \emph{A Guide to \LaTeX}, 3rd~ed.\hskip 1em plus
%   0.5em minus 0.4em\relax Harlow, England: Addison-Wesley, 1999.

% \end{thebibliography}

\bibliographystyle{IEEEtran}
\bibliography{streaming-degree,streaming_triangles}

\end{document}